\def\br#1{\left(#1\right)}
\def\Br#1{\left[#1\right]}
\def\<#1,#2>{\langle #1,#2 \rangle}
\def\bothID{\rlap{\hbox to.97\wd0{\hss\vrule height.06\ht0 width.82\wd0}}
\copy0\rlap{\kern-.36\wd0\vrule height1.05\ht0 width.05\ht0}\kern.14\wd0}
 \DeclareMathOperator{\Var}{Var}
 \DeclareMathOperator{\Cov}{Cov}
 \DeclareMathOperator{\LGD}{LGD}
\newtheorem{theorem}{Theorem}
\newtheorem{corollary}[theorem]{Corollary}
\newtheorem{definition}[theorem]{Definition}
\newtheorem{proposition}[theorem]{Proposition}
\newtheorem{remark}[theorem]{Remark}
\newcommand\reallywidehat[1]{%
\savestack{\tmpbox}{\stretchto{%
  \scaleto{%
    \scalerel*[\widthof{\ensuremath{#1}}]{\kern-.6pt\bigwedge\kern-.6pt}%
    {\rule[-\textheight/2]{1ex}{\textheight}}
  }{\textheight}%
}{0.5ex}}%
\stackon[1pt]{#1}{\tmpbox}%
}
\begin{document}

\title{Credit Bubbles in Arbitrage Markets: The Geometric Arbitrage Approach to Credit Risk}

\author{Simone Farinelli\\
        Core Dynamics GmbH\\
        Scheuchzerstrasse 43\\
        CH-8006 Zurich\\
        Email: simone@coredynamics.ch\\and\\
        Hideyuki Takada\\
        Department of Information Science\\
        Narashino Campus, Toho University\\
        2-2-1-Miyama, Funabashi-Shi\\ J-274-8510 Chiba\\
        Email: hideyuki.takada@is.sci.toho-u.ac.jp
        }

\maketitle

\begin{abstract}
We apply Geometric Arbitrage Theory to obtain results in mathematical finance for credit markets, which do not need stochastic differential geometry in their formulation.
We obtain closed form equations involving default intensities and loss given defaults characterizing the no-free-lunch-with-vanishing-risk condition for corporate bonds, as well as the generic dynamics for credit market allowing for arbitrage possibilities. Moreover, arbitrage credit bubbles for both base credit assets and credit derivatives are explicitly computed for the market dynamics minimizing the arbitrage.
\end{abstract}

\tableofcontents

\section{Introduction}
This paper utilizes a conceptual structure - called in Geometric Arbitrage Theory - to
model arbitrage in credit markets. GAT embeds classical stochastic finance
into a stochastic differential geometric framework to characterize arbitrage.
 The main contribution of this approach
consists of modeling markets made of basic financial instruments
together with their term structures as principal fibre bundles.
Financial features of this market - like no arbitrage and
equilibrium - are then characterized in terms of standard
differential geometric constructions - like curvature - associated
to a natural connection in this fibre bundle.
Principal fibre bundle theory has been heavily exploited in
theoretical physics as the language in which laws of nature can be
best formulated by providing an invariant framework to describe
physical systems and their dynamics. These ideas can be carried
over to mathematical finance and economics. A market is a
financial-economic system that can be described by an appropriate
principle fibre bundle. A principle like the invariance of market
laws under change of num\'{e}raire can be seen then as gauge
invariance.\par The fact that gauge theories are the natural language
to describe economics was first proposed by Malaney and Weinstein
in the context of the economic index problem (\cite{Ma96},
\cite{We06}). Ilinski (see \cite{Il00} and \cite{Il01}) and Young
(\cite{Yo99}) proposed to view arbitrage as the curvature of a
gauge connection, in analogy to some physical theories.
Independently, Cliff and Speed (\cite{SmSp98}) further developed
Flesaker and Hughston seminal work (\cite{FlHu96}) and utilized
techniques from differential geometry (indirectly mentioned by
allusive wording) to reduce the complexity of asset models before
stochastic modeling.\par
Perhaps due to its borderline nature lying
at the intersection between stochastic finance and differential
geometry, there was almost no further mathematical research, and
the subject, unfairly considered as an exotic topic, remained
confined to econophysics, (see \cite{FeJi07}, \cite{Mo09} and
\cite{DuFiMu00}). In \cite{Fa15, Fa20} Geometric Arbitrage Theory has been
given a rigorous mathematical foundation utilizing the formal background
of stochastic differential geometry as in Schwartz (\cite{Schw80}),
Elworthy (\cite{El82}), Em\'{e}ry(\cite{Em89}), Hackenbroch and Thalmaier (\cite{HaTh94}),
Stroock (\cite{St00}) and Hsu (\cite{Hs02}). GAT can bring new insights
to mathematical finance by looking at the same concepts from a different perspective,
so that the new results can be understood without stochastic differential
geometric background. This is the case for the main contributions
of this paper, a no arbitrage characterization of credit markets.\par
More precisely,  we assume that there is a market in one currency
for both government and corporate bonds for different maturities and we
choose the government bond as num\'{e}raire.
With the formal notation introduced in Subsection \ref{CRGAT} we will prove following results.
\begin{theorem}[\textbf{No Arbitrage Credit Market}]\label{Thm1}
Let $\lambda=\lambda_t$ and $\LGD=\LGD_t$ be the default intensity and the Loss-Given-Default, respectively, of the corporate bond. Let $P^{\text{Corp, Gov}}$ and $r^{\text{Corp, Gov}}$ be the term structures and short rate for corporate and government bonds.
The following assertions are equivalent:
\begin{itemize}
\item [(i)] The credit market model satisfies the no-free-lunch-with-vanishing-risk condition.
\item[(ii)] There exists a positive local martingale $\beta=(\beta_t)_{t\ge0}$ such that deflators and short rates satisfy for  all times the condition
\begin{equation}
\boxed{
r_t^{\text{Corp}}-r_t^{\text{Gov}}=\beta_t\LGD_t\lambda_t.
}
\end{equation}
\item[(iii)] There exists a positive local martingale $\beta=(\beta_t)_{t\ge0}$ such that deflators and term structures satisfy for  all times the condition
\begin{equation}
\boxed{
\frac{P^{\text{Corp}}_{t,s}}{P^{\text{Gov}}_{t,s}}=\mathbb{E}_t\left[\exp\left(-\int_t^sdu\,\beta_u\LGD_u\lambda_u\right)\right].
}
\end{equation}
\end{itemize}
\end{theorem}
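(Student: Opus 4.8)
The plan is to deduce the three-way equivalence from the general no-arbitrage characterization of Geometric Arbitrage Theory established in \cite{Fa14}, applied to the principal fibre bundle whose two relevant fibre coordinates are the corporate and the government term structures, with the defaultable corporate bond described in reduced form: a default time $\tau$ with $P$-intensity $\lambda$ and recovery of a fraction $1-\LGD$ of the pre-default market value. The backbone is the chain (i) $\Leftrightarrow$ (ii) $\Leftrightarrow$ (iii), the last equivalence being the passage between the infinitesimal (short-rate) and the finite-horizon (term-structure) form of one and the same no-arbitrage condition, carried by the \emph{same} positive semimartingale $\beta$. As a preliminary step I would write the It\^o--Doléans decomposition of the government-deflated value of the corporate bond \emph{including} the default jump: on $\{\tau>t\}$ it follows the given SDE with coefficients $\alpha^{\text{Corp}},\sigma^{\text{Corp}}$, and at $\tau$ it is multiplied by $1-\LGD_\tau$, so the compensated jump martingale $\mathbf 1_{\{\tau\le t\}}-\int_0^{t\wedge\tau}\lambda_u\,du$ enters its canonical decomposition; together with the dynamics of $D^{\text{Gov}}$ this fixes the drifts feeding the GAT curvature, while the short rates $r^{\text{Corp}}_t=-\partial_s\log P^{\text{Corp}}_{t,s}\big|_{s=t}$ and $r^{\text{Gov}}_t=-\partial_s\log P^{\text{Gov}}_{t,s}\big|_{s=t}$ are read off from the term-structure/deflator compatibility built into the bundle.

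For (i) $\Leftrightarrow$ (ii): by the form of the Fundamental Theorem of Asset Pricing used in \cite{Fa14}, NFLVR is equivalent to the vanishing of the arbitrage curvature, i.e. to the existence of an equivalent (local) martingale measure $\mathbb Q$ for the government-deflated prices. The continuous (Brownian) part of the curvature condition is the usual market-price-of-risk equation and, being common to both bonds, drops out of the spread; the genuinely new ingredient is the point-process part. Girsanov's theorem for the default time turns the $P$-intensity $\lambda$ into a $\mathbb Q$-intensity $\lambda^{\mathbb Q}=\beta\,\lambda$ for a positive semimartingale $\beta$, and the $\mathbb Q$-martingale property of the deflated corporate bond with recovery forces the pre-default excess drift to offset the expected loss rate, giving $r^{\text{Corp}}_t-r^{\text{Gov}}_t=\LGD_t\lambda^{\mathbb Q}_t=\beta_t\LGD_t\lambda_t$; conversely, from any positive semimartingale $\beta$ with this property one assembles the density process of such a $\mathbb Q$, yielding (ii) $\Rightarrow$ (i). This is the step I expect to be the main obstacle: one has to make the point-process Girsanov transformation rigorous inside the stochastic-differential-geometric setting of GAT, verify that the resulting $\beta$ is a genuine positive semimartingale rather than merely a predictable process, and ensure that the continuum of bond maturities does not obstruct the existence of $\mathbb Q$ — here one leans on the locally bounded predictable coefficients assumed in the statement and on the complexity-reduction results of \cite{Fa14}.

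For (ii) $\Leftrightarrow$ (iii): I would move between the two forms through the no-arbitrage pricing formula for the bundle, namely $P^{\text{Gov}}_{t,s}=\mathbb E^{\mathbb Q}_t[\exp(-\int_t^s r^{\text{Gov}}_u\,du)]$ and $P^{\text{Corp}}_{t,s}=\mathbb E^{\mathbb Q}_t[\exp(-\int_t^s (r^{\text{Gov}}_u+\LGD_u\lambda^{\mathbb Q}_u)\,du)]$ on $\{\tau>t\}$; multiplying by the deflators $D^{\text{Gov}}_t,D^{\text{Corp}}_t$ (which carry the $P$-to-$\mathbb Q$ density) and subtracting, the cancellations forced by the continuous no-arbitrage conditions together with $\lambda^{\mathbb Q}=\beta\lambda$ should collapse the left-hand side of (iii) to $-\beta_t\LGD_t$ times the $P$-survival factor $\mathbb E_t[\exp(-\int_t^s\lambda_h\,dh)]$; differentiating this identity in $s$ at $s=t$ and using $P^{i}_{t,t}=1$ recovers exactly (ii), which closes the loop. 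The only real care needed in this last step is the bookkeeping of the deflator normalizations and of the relation $\lambda^{\mathbb Q}=\beta\lambda$, which is routine once the previous step is in place.
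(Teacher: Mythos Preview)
Your plan is not wrong in spirit, but it is considerably more elaborate than what the paper actually does, and it routes through different machinery. In the paper the proof is essentially one line: one defines the \emph{credit gauge} $(D^{\text{Cred}},P^{\text{Cred}})$ by $D^{\text{Cred}}_t:=D^{\text{Corp}}_t-D^{\text{Gov}}_t$, $f^{\text{Cred}}_{t,s}:=f^{\text{Corp}}_{t,s}-f^{\text{Gov}}_{t,s}$, $r^{\text{Cred}}_t:=r^{\text{Corp}}_t-r^{\text{Gov}}_t$, and then simply reads off the three equivalent conditions from the abstract no-arbitrage Theorem (the one giving NFLVR $\Leftrightarrow$ $r_t^x=-\mathcal{D}\log(\beta_tD_t^x)$ $\Leftrightarrow$ $P^x_{t,s}=\mathbb{E}_t[\beta_sD_s^x]/(\beta_tD_t^x)$). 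The identification with $\LGD_t\lambda_t$ and with the survival factor $\mathbb{E}_t[\exp(-\int_t^s\lambda_h\,dh)]$ then comes from the recovery specification $D^{\text{Cred}}_{t^+}=(1-\LGD_tX_t)D^{\text{Corp}}_{t^-}-D^{\text{Gov}}_t$ and the Nelson derivative $DX_t=\lambda_t$ computed just before. No Girsanov theorem for point processes, no explicit risk-neutral intensity $\lambda^{\mathbb Q}$, and no It\^o--Dol\'eans decomposition are invoked; the whole argument lives inside the gauge-theoretic formalism with $\beta$ being the state price deflator supplied by the abstract theorem.

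Two remarks on your version. First, you write that NFLVR is equivalent to the vanishing of the arbitrage curvature; in the paper's hierarchy this is not quite right: zero curvature (ZC) is the weakest ($0$th order) condition, and one only has NFLVR $\Rightarrow$ NA $\Rightarrow$ ZC in general. What NFLVR is equivalent to is the existence of the positive semimartingale $\beta$ in the conditions above, i.e.\ a pricing kernel. Second, and more substantively, your $\beta$ is the Girsanov intensity ratio $\lambda^{\mathbb Q}/\lambda$, whereas the $\beta$ in the statement is the state price deflator common to all gauges; these are different objects, and your argument would need an additional step showing they can be made to coincide (or at least that the former can be packaged as a positive semimartingale playing the role demanded by (ii) and (iii) simultaneously). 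Your route via the classical reduced-form martingale argument is a legitimate alternative and arguably more transparent probabilistically, but it does not match the paper's purely geometric derivation.
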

This characterization of no arbitrage credit markets has been known for a long time (see f.i. \cite{Schoe00}, page 39) and can be now easily inferred as a consequence of Geometric Arbitrage Theory.\\ Moreover,
we obtain what to our knowledge is a new result for credit markets.
\begin{theorem}[\textbf{Novikov's Condition}]\label{Thm2}
Let the credit market fullfill
\begin{equation}
\boxed{
r_t^{\text{Corp}}-r_t^{\text{Gov}}=\beta_t\LGD_t\lambda_t,
}
\end{equation}
for a positive semimartingale $(\beta_t)_{t}$ and
\begin{equation}
\boxed{
\mathbb{E}_0\left[\exp\left(\int_0^Tdt\,\frac{1}{2}\left(\frac{\lambda_t\LGD_t}{1-\LGD_tX_t}-(r_t^{\text{Corp}}-r_t^{\text{Gov}})\right)^2\frac{t}{Q_{t}(K)}\right)\right]<+\infty,
}
\end{equation}
where
\begin{equation}
\boxed{
Q_t(K):=\frac{W_t^{\dagger}W_t}{t}\sim\chi^2(K).
}
\end{equation}
Then, the credit market satisfies the no-free-lunch-with-vanishing risk.
\end{theorem}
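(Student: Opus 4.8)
The plan is to regard the implication (ii)$\Rightarrow$(i) of Theorem~\ref{Thm1} as, at the formal level of Geometric Arbitrage Theory, the statement that the short-rate relation $r_t^{\mathrm{Corp}}-r_t^{\mathrm{Gov}}=\beta_t\LGD_t\lambda_t$ produces a \emph{candidate} equivalent local martingale measure for the deflated bond prices; Theorem~\ref{Thm2} then supplies a concrete, checkable integrability condition under which this candidate is realised by a \emph{true} (not merely local) martingale density, so that the measure is an honest probability measure and the first fundamental theorem of asset pricing yields no-free-lunch-with-vanishing-risk. Accordingly, the first step is to write down explicitly, within the formalism of subsection~\ref{CRGAT}, the predictable market-price-of-risk process $\theta=(\theta_t)_{t\ge0}$ that the relation above determines --- it is built from the volatilities $\sigma^{\mathrm{Corp}},\sigma^{\mathrm{Gov}}$, the default intensity $\lambda$, the loss-given-default $\LGD$ and the semimartingale $\beta$ --- together with its Dol\'eans--Dade exponential $Z_t=\mathcal E\!\left(-\int_0^{\cdot}\theta_s^{\dagger}\,dW_s\right)_t$, which a priori is only a positive local martingale with $Z_0=1$.

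The second step reduces the claim to the classical Novikov bound: it suffices to show $\mathbb{E}_0\!\left[\exp\!\left(\tfrac12\int_0^{\tau}\|\theta_s\|^2\,ds\right)\right]<+\infty$, since by Novikov's criterion this makes $(Z_t)_{t\le\tau}$ a uniformly integrable martingale and legitimises $dQ=Z_\tau\,dP$ as an equivalent local martingale measure on the trading horizon. The substance of the proof is then a single pathwise estimate, bounding $\tfrac12\int_0^{\tau}\|\theta_s\|^2\,ds$ from above by $\left(\tfrac{2\LGD_{\tau}}{2-\LGD_{\tau}}\right)^{2}\tau/Q^2_{\tau}(K)$ with $Q^2_t(K)=\sqrt{W_t^{\dagger}W_t/t}=\|W_t\|/\sqrt t$. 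I would obtain this by inserting the explicit form of $\theta_s$ from the first step, using $\LGD_s\in[0,1]$ together with the monotonicity of $x\mapsto 2x/(2-x)$ on $[0,1)$ to replace the running loss-given-default by its terminal value $\LGD_\tau$, and controlling the normalised occupation-type term arising from $\theta$ by its terminal value, so that the time integration and the $1/\|W\|$ normalisation combine into the factor $\tau/Q^2_\tau(K)$. Taking $\mathbb{E}_0$ and invoking the hypothesis gives the Novikov bound; Novikov's criterion then makes $Z$ a true martingale, an equivalent local martingale measure exists, and NFLVR follows.

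The main obstacle I anticipate is precisely that pathwise estimate. One has to extract the exact closed form of the market price of risk from the GAT construction of subsection~\ref{CRGAT} and then perform the comparison that trades the stochastic integral $\int_0^\tau\|\theta_s\|^2\,ds$ for an explicit function of $\LGD_\tau$, $\tau$ and $\|W_\tau\|$; the monotone reparametrisation $\LGD\mapsto 2\LGD/(2-\LGD)$ and the precise way the (possibly small) denominator $Q^2_\tau(K)$ enters are exactly what fix the constants in the hypothesis. A subsidiary point to verify is that the structural no-arbitrage relation $r^{\mathrm{Corp}}-r^{\mathrm{Gov}}=\beta\LGD\lambda$ genuinely absorbs the coefficients $\sigma^{\mathrm{Corp/Gov}},\lambda,\beta$ into $\theta$ in a manner compatible with the stated bound, so that no further moment condition on those processes is silently required.
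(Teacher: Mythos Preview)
Your outline correctly locates the logical skeleton --- the short-rate identity furnishes the zero-curvature condition, and the task is to verify the Novikov bound of Proposition~\ref{NovikovThm} so that (ZC) upgrades to (NFLVR). But the mechanism you propose for the core estimate is not the one that actually works here, and the gap is substantive.

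You treat the displayed hypothesis as an \emph{upper bound} for $\tfrac12\int_0^\tau\|\theta_s\|^2\,ds$, to be reached by monotonicity in $\LGD$ and by ``controlling an occupation-type term by its terminal value''. In the paper the expression is not a bound but the \emph{exact value} of the Novikov integral, obtained by an explicit computation that hinges on three ingredients you do not invoke. First, the corporate deflator is taken in the concrete form $D_t^{\mathrm{Corp}}=1-\LGD_t X_t$ (mollified to $1-J_\varepsilon*(\LGD_t X_t)$ so that the It\^o dynamics apply). Second, matching this against the exponential solution of the SDE and applying the Nelson mean derivative --- crucially using $\mathcal{D}W_t=W_t/(2t)$ --- yields the explicit volatility
\[
\sigma_t^{\mathrm{Corp}}=-2t\,\frac{\mathcal{D}\bigl(J_\varepsilon*(\LGD_tX_t)\bigr)}{1-J_\varepsilon*(\LGD_tX_t)}\,W_t^{\dagger}(W_tW_t^{\dagger})^{-1},
\]
which is where the dependence on $W_t$ (and hence on $Q^2_\tau(K)$) enters. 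Third, since $\mathcal{D}X_t=\delta(t-\tau)$, the squared Sharpe ratio is a Dirac mass at $t=\tau$; the time integral therefore collapses to a single evaluation at $\tau$, and a short algebraic simplification of $\Tr^2((W_\tau W_\tau^\dagger)^{-1})/\bigl(W_\tau^\dagger(W_\tau W_\tau^\dagger)^{-2}W_\tau\bigr)=1/(W_\tau^\dagger W_\tau)$ produces exactly
\[
\int_0^{+\infty}\tfrac12\Bigl(\tfrac{\alpha_t^x}{|\sigma_t^x|}\Bigr)^2dt=\Bigl(\tfrac{2\LGD_\tau}{2-\LGD_\tau}\Bigr)^2\frac{\tau}{Q^2_\tau(K)}
\]
after letting $\varepsilon\to0^+$. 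The factor $2-\LGD_\tau$ is the value of $2\bigl(1-\LGD_tX_t\bigr)$ at $t=\tau$ coming from the delta evaluation, not a monotone replacement of a running quantity by a terminal one. Without the Dirac-mass mechanism there is no reason the time integral of a generic $\|\theta_s\|^2$ should be controlled by a function of $(\LGD_\tau,\tau,\|W_\tau\|)$ alone, so your proposed pathwise estimate, as stated, does not close.
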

\noindent Furthermore, applying recently discovered results about the extension of asset bubbles for markets allowing for arbitrage, in Theorem \ref{CRThm} we can compute explicitly the arbitrage bubble for credit market composed by base assets and credit derivatives. This is again a new result.\\
This paper is structured
as follows. Section $2$ reviews
classical stochastic finance and  the results of Geometric Arbitrage Theory. A guiding example is provided for a market whose asset
prices are It\^{o} processes. Proof are omitted and can be found in \cite{Fa15, Fa20}, \cite{FaTa20}, \cite{FaTa20Bis} and in \cite{FaTa20Tris}.
Section $3$ provides the mathematical background to define generalized derivatives of stochastic processes, needed in the following, since the typical processes
associated to credit risk have jumps and, in particular do not allow for Nelson's derivatives in the strong sense. Section $4$ reviews the fundamentals of credit risk and introduces the two basic model types, the structural and the reduced form (intensity based) ones. In Section $5$ the Geometric Arbitrage Theory toolbox introduced in Section $2$ is then utilize to prove results for no arbitrage credit markets. Section $6$ concludes.

\section{Geometric Arbitrage Theory Background}\label{section2}
In this section we explain the main concepts of Geometric Arbitrage Theory introduced
in \cite{Fa15, Fa20} and reviewed and extended in \cite{FaTa20}, \cite{FaTa20Bis} and \cite{FaTa20Tris}, to which we refer for proofs and examples. It can be considered as the GAT
reformulation of market risk.
\subsection{The Classical Market Model}\label{StochasticPrelude}
In this subsection we will summarize the classical set up, which
will be rephrased in section (\ref{foundations}) in differential
geometric terms. We basically follow \cite{HuKe04} and the ultimate
reference \cite{DeSc08}.\par We assume continuous time trading and
that the set of trading dates is $[0,+\infty[$. This assumption is
general enough to embed the cases of finite and infinite discrete
times as well as the one with a finite horizon in continuous time.
Note that while it is true that in the real world trading occurs at
discrete times only, these are not known a priori and can be
virtually any points in the time continuum. This motivates the
technical effort of continuous time stochastic finance.\par The
uncertainty is modelled by a filtered probability space
$(\Omega,\mathcal{A}, \mathbb{P})$, where $\mathbb{P}$ is the
statistical (physical) probability measure,
$\mathcal{A}=\{\mathcal{A}_t\}_{t\in[0,+\infty[}$ an increasing
family of sub-$\sigma$-algebras of $\mathcal{A}_{\infty}$ and
$(\Omega,\mathcal{A}_{\infty}, \mathbb{P})$ is a probability space.
The filtration $\mathcal{A}$ is assumed to satisfy the usual
conditions, that is
\begin{itemize}
\item right continuity: $\mathcal{A}_t=\bigcap_{s>t}\mathcal{A}_s$ for all $t\in[0,+\infty[$.
\item $\mathcal{A}_0$ contains all null sets of
$\mathcal{A}_{\infty}$.
\end{itemize}

The market consists of finitely many \textbf{assets} indexed by
$j=1,\dots,N$, whose \textbf{nominal prices} are given by the
vector valued semimartingale $S:[0,+\infty[\times\Omega\rightarrow\mathbf{R}^N$
denoted by $(S_t)_{t\in[0,+\infty[}$ adapted to the filtration $\mathcal{A}$.
The stochastic process $(S^ j_t)_{t\in[0,+\infty[}$ describes the
price at time $t$ of the $j$th asset in terms of  unit of cash
\textit{at time $t=0$}. More precisely, we assume the existence of a
$0$th asset, the \textbf{cash}, a strictly positive
semimartingale, which evolves according to
$S_t^0=\exp(\int_0^tdu\,r^0_u)$, where the predictable
semimartingale $(r^0_t)_{t\in[0,+\infty[}$ represents the
continuous interest rate provided by the cash account: one always
knows in advance what the interest rate on the own bank account
is, but this can change from time to time. The cash account is
therefore considered the locally risk less asset in contrast to
the other assets, the risky ones. In the following we will mainly
utilize \textbf{discounted prices}, defined as
$\hat{S}_t^j:=S_t^j/S^{0}_t$, representing the asset prices in
terms of \textit{current} unit of cash.\par
 We remark that there is no need to
assume that asset prices are positive. But, there must be at least
one strictly positive asset, in our case the cash. If we want to
renormalize the prices by choosing another asset instead of the
cash as reference, i.e. by making it to our
\textbf{num\'{e}raire}, then this asset must have a strictly
positive price process. More precisely, a generic num\'{e}raire is
an asset, whose nominal price is represented by a strictly
positive stochastic process $(B_t)_{t\in[0,+\infty[}$, and
 which is a portfolio of the original assets $j=0,1,2,\dots,N$. The discounted prices of the original
assets are  then represented in terms of the num\'{e}raire by the
semimartingales $\hat{S}_t^j:=S_t^j/B_t$.\par We assume that there
are no transaction costs and that short sales are allowed. Remark
that the absence of transaction costs can be a serious limitation
for a realistic model. The filtration $\mathcal{A}$ is not
necessarily generated by the price process
$(S_t)_{t\in[0,+\infty[}$: other sources of information than
prices are allowed. All agents have access to the same information
structure, that is to the filtration $\mathcal{A}$.\par A
 \textbf{strategy} is a predictable stochastic
process $x:[0,+\infty[\times\Omega\rightarrow\mathbf{R}^N$
describing the portfolio holdings. The stochastic process $(x^
j_t)_{t\in[0,+\infty[}$ represents the number of pieces of $j$th
asset portfolio held by the portfolio as time goes by. Remark that
the It\^{o} stochastic integral
\begin{equation}
\int_0^tx\cdot dS=\int_0^tx_u\cdot dS_u,
\end{equation}
\noindent and the Stratonovich stochastic integral
\begin{equation}\label{strat}
\int_0^tx\circ dS:=\int_0^tx\cdot d
S+\frac{1}{2}\int_0^td\left<x,S\right>=\int_0^tx_u\cdot d
S_u+\frac{1}{2}\int_0^td \left<x,S\right>_u
\end{equation}
 are well defined
for this choice of integrator ($S$) and integrand ($x$), as long as
the strategy is \textbf{admissible}, meaning that it is $v$-admissible for some $v\ge0$, that is,  $x=(x_t)_{t\in[0,+\infty[}$ is a predictable semimartingale for which the It\^{o} integral satisfies $\int_0^tx\cdot dS\ge-v$ for all $t\ge0$.
 The bracket $\left<\cdot,
\cdot\right>$ in (\ref{strat}) denotes the quadratic covariation of two processes. In a general context
strategies do not need to be semimartingales, but if we want the quadratic covariation and hence
 Stratonovich's integral to be well defined, we must require this additional assumption.
For details about stochastic integration we refer to Appendix A in
\cite{Em89}, which summarizes Chapter VII of the authoritative
\cite{DeMe80}. The portfolio value is the process
$\{V_t\}_{t\in[0,+\infty[}$ defined by
\begin{equation}
V_t:=V_t^x:=x_t\cdot S_t.
\end{equation}
An admissible strategy $x$ is said to be \textbf{self-financing}
if and only if the portfolio value at time $t$ is given by
\begin{equation}
V_t=V_0+\int_0^tx_u\cdot dS_u.
\end{equation}
\noindent This means that the portfolio gain is  the It\^{o}
integral of the strategy with the price process as integrator: the
change of portfolio value is purely due to changes of the assets'
values. The self-financing condition can be rewritten in
differential form as
\begin{equation}
dV_t=x_t\cdot dS_t.
\end{equation}
As pointed out in \cite{BjHu05}, if we want to utilize
Stratonovich's integral to rephrase the self-financing condition,
while maintaining its economical interpretation (which is necessary
for the subsequent constructions of mathematical finance), we write
\begin{equation}
V_t=V_0+\int_0^tx_u\circ dS_u-\frac{1}{2}\int_0
^td\left<x,S\right>_u
\end{equation}
or, equivalently
\begin{equation}
dV_t=x_t\circ dS_t-\frac{1}{2}\,d\left<x,S\right>_t.
\end{equation}
\par An \textbf{arbitrage strategy} (or arbitrage for short) for the market model is an admissible self-financing
strategy $x$, for which one of the following condition holds for
some horizon $T>0$:
\begin{itemize}
\item $P[V_0^{x}<0]=1$ and $P[V_T^{x}\ge0]=1$,
\item $P[V_0^{x}\le0]=1$ and $P[V_T^{x}\ge0]=1$ with $P[V_T^{x}>0]>0$.
\end{itemize}
In Chapter 9 of \cite{DeSc08} the no arbitrage condition is given
a topological characterization. In view of the fundamental Theorem
of asset pricing, the no-arbitrage condition is substituted by a
stronger condition, the so called
no-free-lunch-with-vanishing-risk.

\begin{definition}[\textbf{Arbitrage}] Let the process $(S_t)_{[0,+\infty[}$ be a semimartingale and $(x_t)_{t\in[0,+\infty[}$ be admissible self-financing strategy. Let us consider trading up to time $T\le\infty$. The portfolio wealth at time $t$ is given by by $V_{t}(x):=V_0+\int_0^tx_u\cdot dS_u$, and we denote by $K_0$ the subset of $L^0(\Omega, \mathcal{A}_{T},P)$ containing all such $V_T(x)$, where $x$ is any admissible self-financing strategy.
We define
\begin{itemize}
\item $C_0:=K_0-L_+^0(\Omega, \mathcal{A}_{T},P)$.
\item $C:=C_0\cap L_+^{\infty}(\Omega, \mathcal{A}_{T},P)$.
\item $\bar{C}$: the closure of $C$ in $L^{\infty}$ with respect to the norm topology.
\item $\mathcal{V}^{V_0}:=\left\{(V_{t})_{t\in[0,+\infty[}\,\big{|}\, V_t=V_t(x), \,\text{where } x \text{ is } V_0\text{-admissible} \right\}$.
\item $\mathcal{V}_T^{V_0}:=\left\{V_T\,\big{|}\,(V_{t})_{t\in[0,+\infty[}\in\mathcal{V}^{V_0}\right\}$: terminal wealth for $V_0$-admissible self-financing strategies.
\end{itemize}
We say that $S$ satisfies
\begin{itemize}
\item \textbf{(NA), no arbitrage}, if and only if $C \cap L^{\infty}(\Omega, \mathcal{A}_{T},P)=\{0\}$.
\item \textbf{(NFLVR), no-free-lunch-with-vanishing-risk},  if and only if $\bar{C} \cap L^{\infty}(\Omega, \mathcal{A}_{T},P)=\{0\}$.
\item \textbf{(NUPBR), no-unbounded-profit-with-bounded-risk}, if and only if $\mathcal{V}_T^{V_0}$ is bounded in $L^0$ for some $V_0>0$.
\end{itemize}
\end{definition}
\noindent The relationship between these three different types of arbitrage has been elucidated in \cite{DeSc94} and in \cite{Ka97} with the proof of the following result.
\begin{theorem}
\begin{equation}
\boxed{
\text{(NFLVR)}\Leftrightarrow \text{(NA)}+\text{(NUPBR)}.
}
\end{equation}
\end{theorem}

\noindent Delbaen and Schachermayer proved in 1994 (see
\cite{DeSc08} Chapter 9.4, in particular the main Theorem 9.1.1)
\begin{theorem}[\textbf{First Fundamental Theorem of Asset Pricing in Continuous
Time}]\label{ThmDeSch} Let $(S_t)_{t\in[0,+\infty[}$ and
$(\hat{S}_t)_{t\in[0,+\infty[}$ be  bounded semimartingales. There
is an equivalent martingale measure $\mathbb{P}^*$ for the
discounted prices $\hat{S}$ if and only if the market model
satisfies the (NFLVR).
\end{theorem}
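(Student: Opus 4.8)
The plan is to prove the two implications separately: the direction ``existence of an equivalent martingale measure $\Rightarrow$ (NFLVR)'' is essentially soft, whereas the converse carries all the analytic weight, and for it I would follow the route of Delbaen and Schachermayer. Throughout, the cone $C$ is the one of the definition above, built from the gain processes $(x\cdot\hat S)_\infty$ of the \emph{discounted} price $\hat S$, so that $C=C_0\cap L^\infty$ and (NFLVR) reads $\bar C\cap L^\infty_+=\BR{0}$. For sufficiency, suppose $\mathbb{P}^*\sim\mathbb{P}$ makes $\hat S$ a (bounded, hence genuine) $\mathbb{P}^*$-martingale. For every admissible $x$ the integral $(x\cdot\hat S)_t=\int_0^t x_u\cdot d\hat S_u$ is a $\mathbb{P}^*$-local martingale uniformly bounded from below, so by the Ansel--Stricker lemma it is a $\mathbb{P}^*$-supermartingale and $\mathbb{E}_{\mathbb{P}^*}\Br{(x\cdot\hat S)_\infty}\le0$ whenever the terminal value exists. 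Hence every $f\in K_0$ has $\mathbb{E}_{\mathbb{P}^*}[f]\le0$; subtracting a nonnegative variable preserves this on $C_0$, hence on $C$, and since $g\mapsto\mathbb{E}_{\mathbb{P}^*}[g]$ is $\norm{\cdot}_\infty$-continuous it remains $\le0$ on $\bar C$. If $g\in\bar C\cap L^\infty_+$ then $g\ge0$ together with $\mathbb{E}_{\mathbb{P}^*}[g]\le0$ and $\mathbb{P}^*\sim\mathbb{P}$ force $g=0$, which is (NFLVR).

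For necessity, assume (NFLVR). \emph{Step 1 (weak-$*$ closedness).} I would show that the convex cone $C\subseteq L^\infty$ is closed for the topology $\sigma(L^\infty,L^1)$. By the Krein--\v{S}mulian theorem it suffices that $C\cap\BR{g\in L^\infty:\norm{g}_\infty\le1}$ be weak-$*$ closed, and as that set is metrizable the statement becomes sequential: given $f_n=g_n-h_n$ with $g_n=(x^n\cdot\hat S)_\infty\in K_0$, $0\le h_n\in L^0$, $\norm{f_n}_\infty\le1$, and $f_n\to f$ weak-$*$, one must produce $f\in C$. This is the analytic heart of the theorem. The point is that $g_n\ge f_n\ge-1$ makes the $x^n$ uniformly $1$-admissible, and (NFLVR) — \emph{not} mere (NA) — is exactly what prevents the negative parts $(x^n\cdot\hat S)^-$ from escaping to infinity, so that after passing to forward convex combinations of the $x^n$ (a Koml\'{o}s-type extraction) one obtains a limiting admissible integrand whose terminal gain dominates $f$, placing $f$ in $C$.

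\emph{Step 2 (separating measures) and Step 3 (martingale property).} Once $C=\bar C$ is weak-$*$ closed and, by (NA), $C\cap L^\infty_+=\BR{0}$, for each $g\in L^\infty_+\setminus\BR{0}$ a Hahn--Banach separation in the topology $\sigma(L^\infty,L^1)$ yields $0\ne h\in L^1$ with $\mathbb{E}_\mathbb{P}[hf]\le0$ for all $f\in C$ and $\mathbb{E}_\mathbb{P}[hg]>0$; since $-L^\infty_+\subseteq C$ one gets $h\ge0$, so $\mathbb{Q}_g:=(h/\norm{h}_{L^1})\,\mathbb{P}$ is a probability measure with $\mathbb{Q}_g\ll\mathbb{P}$ and $\mathbb{E}_{\mathbb{Q}_g}[f]\le0$ on $C$ — and it is automatically countably additive because the dual of $(L^\infty,\sigma(L^\infty,L^1))$ is $L^1$, which is precisely what Step 1 buys us. An exhaustion argument — take the supremum of $\mathbb{P}(d\mathbb{Q}/d\mathbb{P}>0)$ over all such $\mathbb{Q}$, realize it by a countable convex combination, and apply the separation above to $\mathbf{1}_B$ with $B=\BR{d\mathbb{Q}/d\mathbb{P}=0}$ to exclude $\mathbb{P}(B)>0$ — produces a single $\mathbb{P}^*\sim\mathbb{P}$ with $\mathbb{E}_{\mathbb{P}^*}[f]\le0$ for all $f\in C$. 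Testing this on the elementary strategies $x=\mathbf{1}_A\mathbf{1}_{]u,v]}e_j$ with $A\in\mathcal{A}_u$ (admissible because $\hat S$ is bounded), whose terminal gains $\pm\mathbf{1}_A(\hat S^j_v-\hat S^j_u)$ both lie in $C$, gives $\mathbb{E}_{\mathbb{P}^*}\Br{\mathbf{1}_A(\hat S^j_v-\hat S^j_u)}=0$; as $\hat S$ is bounded this is exactly the assertion that $\hat S$ is a $\mathbb{P}^*$-martingale.

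The main obstacle is Step 1: the sequential (Krein--\v{S}mulian) closedness of $C$ rests on a delicate convergence lemma for stochastic integrals — uniform control of the negative parts of $\int x^n\cdot d\hat S$ combined with a convex-combination passage to the limit — and it is precisely here that (NFLVR) is indispensable and (NA) alone is not enough. By contrast the remaining ingredients (the Hahn--Banach separation, the exhaustion to an equivalent measure, and reading off the martingale property) are comparatively routine, and the discounting is harmless since one simply works with the gains of $\hat S$ from the outset.
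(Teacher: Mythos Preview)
The paper does not give its own proof of this theorem: it is quoted as background and attributed to Delbaen and Schachermayer (\cite{DeSc08}, Chapter~9.4, Theorem~9.1.1), with no argument reproduced. So there is nothing in the paper to compare your proposal against.

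That said, your sketch is exactly the standard Delbaen--Schachermayer route that the paper is citing: the easy direction via the supermartingale property of admissible integrals under $\mathbb{P}^*$, and the hard direction via Krein--\v{S}mulian weak-$*$ closedness of $C$, Hahn--Banach/Kreps--Yan separation, exhaustion to an equivalent separating measure, and verification of the martingale property on elementary integrands. You have also correctly identified where the real difficulty lies (the compactness/convergence lemma for sequences of admissible integrands underlying Step~1, which is where (NFLVR) rather than (NA) is genuinely used). For the purposes of this paper the theorem is a black box, so your outline is more than what is required here.
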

This is a generalization for continuous time of the
Dalang-Morton-Willinger Theorem proved in 1990 (see \cite{DeSc08},
Chapter 6) for the discrete time case, where the (NFLVR) is relaxed
to the (NA) condition. The Dalang-Morton-Willinger Theorem
generalizes to arbitrary probability spaces the Harrison and Pliska
Theorem (see \cite{DeSc08}, Chapter 2) which holds true in discrete
time for finite probability spaces.\par An equivalent alternative to
the martingale measure approach for asset pricing purposes is given
by the pricing kernel (state price deflator) method.
\begin{definition}
Let $(S_t)_{t\in[0,+\infty[}$  be a semimartingale describing the
price process for the assets of our market model. The positive
semimartingale $(\beta_t)_{t\in[0,+\infty[}$ is called
\textbf{pricing kernel (or state price deflator)} for $S$ if and only if
$(\beta_tS_t)_{t\in[0,+\infty[}$ is a $\mathbb{P}$-martingale.
\end{definition}
As shown in \cite{HuKe04} (Chapter 7, definitions 7.18, 7.47 and
Theorem 7.48), the existence of a pricing kernel is equivalent to
the existence of an equivalent martingale measure for a specific choice of num\'{e}raire.
If we want the num\'{e}raire to be arbitrary, like the one we originally choose for the model, then we have to additionally assume that the pricing kernel $\beta$ is a local $\mathbb{P}$-martingale.
\begin{theorem}\label{ThmZ}
Let $(S_t)_{t\in[0,+\infty[}$ and $(\hat{S}_t)_{t\in[0,+\infty[}$
be  bounded semimartingales. The process $\hat{S}$ admits an
equivalent martingale measure $\mathbb{P}^*$ if and only if there
is a pricing kernel $\beta$ for $S$ (or for $\hat{S}$), which is a local martingale.
\end{theorem}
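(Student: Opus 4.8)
The plan is to use the Radon--Nikodym density process of an equivalent martingale measure as the bridge to a pricing kernel, since the two objects differ only by the num\'{e}raire $S^0$. Two preliminary observations organize the argument. First, $\beta$ is a pricing kernel for $S$ if and only if $\tilde\beta:=\beta S^0$ is a pricing kernel for $\hat{S}$, because $\tilde\beta_t\hat{S}_t=\beta_tS_t^0\cdot S_t/S_t^0=\beta_tS_t$; this disposes of the parenthetical ``(or for $\hat{S}$)''. Second, since the cash is a traded asset and its discounted price is the constant $\hat{S}^0_t\equiv1$, any pricing kernel $\tilde\beta$ for $\hat{S}$ is automatically a strictly positive $\mathbb{P}$-martingale (apply the defining martingale property to the coordinate $\hat{S}^0$). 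Throughout I use the abstract Bayes formula: if $\mathbb{P}^*\sim\mathbb{P}$ has density process $Z_t=\mathbb{E}[d\mathbb{P}^*/d\mathbb{P}\mid\mathcal{A}_t]$, then $\mathbb{E}^{\mathbb{P}^*}[X_t\mid\mathcal{A}_s]=Z_s^{-1}\,\mathbb{E}[Z_tX_t\mid\mathcal{A}_s]$ for $s\le t$, so that $X$ is a $\mathbb{P}^*$-martingale precisely when $ZX$ is a $\mathbb{P}$-martingale.

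First I would prove that an equivalent martingale measure yields a pricing kernel. Let $\mathbb{P}^*$ be an EMM for $\hat{S}$ and let $Z$ be its strictly positive density martingale. Setting $\beta_t:=Z_t/S_t^0$ gives a strictly positive semimartingale, and by the Bayes formula the $\mathbb{P}^*$-martingale property of $\hat{S}$ is equivalent to $(Z_t\hat{S}_t)_t=(\beta_tS_t)_t$ being a $\mathbb{P}$-martingale. Hence $\beta$ is a pricing kernel for $S$ and $\tilde\beta=\beta S^0=Z$ is a pricing kernel for $\hat{S}$.

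For the converse I would start from a pricing kernel and, via the correspondence above, assume it is a pricing kernel $\tilde\beta$ for $\hat{S}$, hence a strictly positive $\mathbb{P}$-martingale. Normalizing to $Z_t:=\tilde\beta_t/\tilde\beta_0$ with $Z_0=1$, I would define $\mathbb{P}^*$ by $d\mathbb{P}^*/d\mathbb{P}\big|_{\mathcal{A}_t}=Z_t$; strict positivity of $\tilde\beta$ gives genuine equivalence $\mathbb{P}^*\sim\mathbb{P}$, not merely absolute continuity. Since $Z_t\hat{S}_t=\tilde\beta_t\hat{S}_t/\tilde\beta_0$ is a $\mathbb{P}$-martingale, the Bayes formula immediately yields that $\hat{S}$ is a $\mathbb{P}^*$-martingale, so $\mathbb{P}^*$ is an EMM.

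The hard part will be the measure-theoretic closing in the infinite-horizon continuous-time setting: the prescriptions $d\mathbb{P}^*/d\mathbb{P}\big|_{\mathcal{A}_t}=Z_t$ define $\mathbb{P}^*$ consistently on each $\mathcal{A}_t$, but to obtain a single measure on $\mathcal{A}_\infty$ equivalent to $\mathbb{P}$ one needs the density martingale $Z$ to be uniformly integrable, so that it closes with a terminal value $Z_\infty$ satisfying $\mathbb{E}[Z_\infty]=1$. The boundedness hypotheses on $S$ and $\hat{S}$ together with the strict positivity of the kernel are precisely what secure this; for the closing argument and the attendant integrability bookkeeping I would follow the cited treatment in \cite{HuKe04} (Chapter~7, Theorem~7.48), after which the equivalence reduces to the elementary Bayes-rule identities displayed above.
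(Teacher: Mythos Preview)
The paper does not actually prove this theorem: it is stated as a known result and attributed to \cite{HuKe04} (Chapter~7, definitions~7.18, 7.47 and Theorem~7.48), with no argument given in the text. Your proposal supplies the standard proof via the Radon--Nikodym density process and the abstract Bayes formula, which is exactly the argument underlying the cited reference; indeed, you yourself defer the infinite-horizon uniform-integrability closure to the same Theorem~7.48 in \cite{HuKe04}, so your approach and the paper's (implicit) approach coincide.
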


\subsection{Geometric Reformulation of the Market Model: Primitives}
We are going to introduce a more general representation of the
market model introduced in section \ref{StochasticPrelude}, which
better suits to the arbitrage modeling task.
\begin{definition}\label{defi1}
A \textbf{gauge} is an ordered pair of two $\mathcal{A}$-adapted
real valued semimartingales $(D, P)$, where
$D=(D_t)_{t\ge0}:[0,+\infty[\times\Omega\rightarrow\mathbf{R}$ is
called \textbf{deflator} and
$P=(P_{t,s})_{t,s}:\mathcal{T}\times\Omega\rightarrow\mathbf{R}$,
which is called \textbf{term structure}, is considered as a
stochastic process with respect to the time $t$, termed
\textbf{valuation date} and
$\mathcal{T}:=\{(t,s)\in[0,+\infty[^2\,|\,s\ge t\}$. The parameter
$s\ge t$ is referred as \textbf{maturity date}. The following
properties must be satisfied a.s. for all $t, s$ such that $s\ge
t\ge 0$:
 \begin{itemize}
  \item [(i)] $P_{t,s}>0$,
  \item [(ii)] $P_{t,t}=1$.
 \end{itemize}
\end{definition}

\begin{remark}
Deflators and term structures can be considered
\textit{outside the context of fixed income.} An arbitrary
financial instrument is mapped to a gauge $(D, P)$ with the
following economic interpretation:
\begin{itemize}
\item Deflator: $D_t$ is the value of the financial instrument at time $t$ expressed in terms of some num\'{e}raire. If we
choose the cash account, the $0$-th asset as num\'{e}raire, then
we can set $D_t^j:=\hat{S}_t^j=\frac{S_t^j}{S_t^0}\quad(j=1,\dots
N)$.
\item Term structure: $P_{t,s}$ is the value at time $t$ (expressed in units of
deflator at time $t$) of a synthetic zero coupon bond with
maturity $s$ delivering one unit of financial instrument at time
$s$. It represents a term structure of forward prices with respect
to the chosen num\'{e}raire.
\end{itemize}
\noindent We point out that there is no unique choice for
deflators and term structures describing an asset model. For
example, if a set of deflators qualifies, then we can multiply
every deflator  by the same positive semimartingale to obtain
another suitable set of deflators. Of course term structures have
to be modified accordingly. The term "deflator" is clearly
inspired by actuarial mathematics. In the present context it
refers to a nominal asset value up division by a strictly positive
semimartingale (which can be the state price deflator if this
exists and it is made to the num\'{e}raire). There is no need to
assume that a deflator is a positive process. However, if we want
to make an asset to our num\'{e}raire, then we have to make sure
that the corresponding deflator is a strictly positive stochastic
process.
\end{remark}

\begin{definition}\label{int}
The term structure can be written as a functional of the
\textbf{instantaneous forward rate } f defined as
\begin{equation}
\boxed{
  f_{t,s}:=-\frac{\partial}{\partial s}\log P_{t,s},\quad
  P_{t,s}=\exp\left(-\int_t^sdhf_{t,h}\right).
  }
\end{equation}
\noindent and
\begin{equation}
\boxed{
 r_t:=\lim_{s\rightarrow t^+}f_{t,s}
 }
\end{equation}
\noindent is termed \textbf{short rate}.
\end{definition}
\begin{remark}

Since $(P_{t,s})_{t,s}$ is a $t$-stochastic process (semimartingale) depending on a parameter $s\ge t$, the $s$-derivative can be defined deterministically, and the expressions above make sense pathwise in a both classical and generalized sense. In a generalized sense we will always have a $\mathcal{D}^{\prime}$ derivative for any $\omega\in \Omega$; this corresponds to a classic $s$-continuous  derivative if $P_{t,s}(\omega)$ is a $C^1$-function of $s$ for any fixed $t\ge0$ and $\omega\in\Omega$.
\end{remark}

\begin{remark} The special choice of vanishing interest rate $r\equiv0$ or flat term structure
$P\equiv1$ for all assets corresponds to the classical model,
where only asset prices and their dynamics are relevant.
\end{remark}

\noindent Let us consider -in continuous time- a market with $N$ assets and a
num\'{e}raire. A general portfolio at time $t$ is described by the
vector of nominals $x\in \mathfrak{X}$, for an open set
$\mathfrak{X}\subset\mathbb{R}^N$. Following Definition \ref{defi1}, the asset model induces for $j=1,\dots,N$ the gauge
\begin{equation}(D^j,P^j)=((D_t^j)_{t\in[0, +\infty[},(P_{t,s}^j)_{s\ge t}),\end{equation}
\noindent where $D^j$ denotes the deflator and $P^j$ the term
structure. This can be written as
\begin{equation}P_{t,s}^j=\exp\left(-\int_t^sf^j_{t,u}du\right),\end{equation}
where $f^j$ is the instantaneous forward rate process for the $j$-th asset and the corresponding short rate is given by $r_t^j:=\lim_{u\rightarrow 0^+}f^j_{t,u}$. For a
portfolio with nominals $x\in \mathfrak{X}\subset\mathbb{R}^N$ we define
\begin{equation}
\boxed{
D_t^x:=\sum_{j=1}^Nx_jD_t^j\quad
f_{t,u}^x:=\sum_{j=1}^N\frac{x_jD_t^j}{\sum_{j=1}^Nx_jD_t^j}f_{t,u}^j\quad
P_{t,s}^x:=\exp\left(-\int_t^sf^x_{t,u}du\right).
}
\end{equation}
The short rate writes
\begin{equation}
\boxed{
r_t^x:=\lim_{u\rightarrow 0^+}f^x_{t,u}=\sum_{j=1}^N\frac{x_jD_t^j}{\sum_{j=1}^Nx_jD_t^j}r_t^j.
}
\end{equation}
The image space of all possible strategies reads
\begin{equation}M:=\{(t,x)\in [0,+\infty[\times\mathfrak{X}\}.\end{equation}

\noindent We can prove following results which characterizes arbitrage as
curvature.
\begin{theorem}[\textbf{No Arbitrage}]\label{holonomy}
The following assertions are equivalent:
\begin{itemize}
\item [(i)] The market model satisfies the no-free-lunch-with-vanishing-risk condition.
\item[(ii)] There exists a positive local martingale $\beta=(\beta_t)_{t\ge0}$ such that deflators and short rates satisfy for all times and all portfolio nominals $(t,x)\in M$ the condition
\begin{equation}
\boxed{
r_t^x=-\mathcal{D}\log(\beta_tD_t^x).
}
\end{equation}
\item[(iii)] There exists a positive local martingale $\beta=(\beta_t)_{t\ge0}$ such that deflators and term structures satisfy for all times and all portfolio nominals $(t,x)\in M$ the condition
\begin{equation}
\boxed{
P^x_{t,s}=\frac{\mathbb{E}_t[\beta_sD^x_s]}{\beta_tD^x_t}.
}
\end{equation}
\end{itemize}
\end{theorem}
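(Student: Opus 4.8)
The plan is to read off the three equivalences from the Curvature Formula of Proposition \ref{curvature}, using the Fundamental Theorem of Asset Pricing (Theorem \ref{ThmDeSch}) and the pricing-kernel characterization (Theorem \ref{ThmZ}) to bridge the purely local, differential-geometric statement and the global (NFLVR) condition, with the holonomy picture as the conceptual glue that explains why curvature is the right object. First I would establish that (i) is equivalent to flatness of the connection $\chi$, i.e. to $R\equiv0$. On one hand, a closed $\mathcal{D}$-admissible self-financing strategy produces, after traversing one loop in the base $M$, a net change of value equal to the holonomy of $\chi$ around that loop, so the impossibility of building a riskless gain this way is equivalent — on the trivial bundle $\mathcal{B}\cong M\times G^{*}$, where the relevant portfolio space $X$ is taken connected and loops contract — to the vanishing of the curvature $R$. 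On the other hand, the sharpening of "no closed-loop arbitrage" to the full (NFLVR), which involves the $L^{\infty}$-closure and the limit $t\to+\infty$, I would route through Theorem \ref{ThmDeSch}/Theorem \ref{ThmZ}: (NFLVR) is equivalent to the existence of a strictly positive semimartingale $\beta$ (a state price deflator) such that $\beta_{t}D_{t}^{x}$ is a $\mathbb{P}$-martingale for every constant nominal vector $x$ (linearity in $x$ follows because $D_{t}^{x}=\sum_{j}x_{j}D_{t}^{j}$ has constant coefficients).

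Next I would convert flatness into (ii) via Proposition \ref{curvature}: $R(x,t,g)=g\,dt\wedge d_{x}\!\left[\mathcal{D}\log(D_{t}^{x})+r_{t}^{x}\right]$, so $R\equiv0$ is equivalent to $\partial_{x_{j}}\!\left[\mathcal{D}\log(D_{t}^{x})+r_{t}^{x}\right]=0$ for all $j$ and $(x,t)$, i.e. to the statement that $\psi_{t}:=\mathcal{D}\log(D_{t}^{x})+r_{t}^{x}$ is independent of $x\in X$ (here connectedness of $X$ is used). Given this, define $\beta$ as the Nelson/Stratonovich exponential solving $\mathcal{D}\log\beta_{t}=-\psi_{t}$ with $\beta_{0}=1$; it is strictly positive by the exponential form and a semimartingale because $\psi$ is assembled from the (locally bounded, semimartingale) coefficients of the gauges, and then $r_{t}^{x}=\psi_{t}-\mathcal{D}\log(D_{t}^{x})=-\mathcal{D}\log(\beta_{t}D_{t}^{x})$, which is (ii). Conversely, if (ii) holds then $\psi_{t}=-\mathcal{D}\log\beta_{t}$ is $x$-independent since $\beta$ does not depend on $x$, so $d_{x}\psi\equiv0$, hence $R\equiv0$ and (i) follows; one also checks that this $\beta$ is an honest state price deflator by using the self-financing $\mathcal{D}$-calculus identity (\ref{sf}) to see that the spot condition $r_{t}^{x}=-\mathcal{D}\log(\beta_{t}D_{t}^{x})$ is exactly the vanishing of the (Itô/Stratonovich-corrected) drift of $\beta_{t}D_{t}^{x}$, with local boundedness of the coefficients upgrading the resulting local martingale to a true martingale.

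For (ii)$\Leftrightarrow$(iii) I would use the forward-rate/term-structure consistency of Definition \ref{int}. Once $\beta_{t}D_{t}^{x}$ is a $\mathbb{P}$-martingale, pricing with the state price deflator $\beta$ the synthetic zero-coupon bond of maturity $s$ — which by definition has value $P^{x}_{t,s}$ in units of the time-$t$ deflator and delivers, at time $s$, one unit of the instrument, i.e. $D^{x}_{s}$ in num\'{e}raire units — yields $P^{x}_{t,s}\,\beta_{t}D^{x}_{t}=\mathbb{E}_{t}[\beta_{s}D^{x}_{s}]$, which is (iii). Conversely, differentiating (iii) in $s$ at $s=t^{+}$, using $P^{x}_{t,t}=1$ and $r^{x}_{t}=-\partial_{s}\log P^{x}_{t,s}\big|_{s=t^{+}}$, and the martingale/$\mathcal{D}$-drift dictionary again, recovers the spot condition (ii). Hence (i)$\Leftrightarrow$(ii)$\Leftrightarrow$(iii).

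The step I expect to be the main obstacle is the first reduction: matching the pointwise, pathwise flatness condition with the genuinely global no-free-lunch-with-vanishing-risk property. The holonomy–arbitrage correspondence only delivers no-\emph{arbitrage} around loops, so promoting it to (NFLVR) must be done through the Fundamental Theorem of Asset Pricing rather than by hand; and symmetrically one must verify that the $\beta$ reconstructed from the flatness condition is not merely a formal solution of a PDE but an actual state price deflator — strictly positive, a semimartingale, and making $\beta D^{x}$ a \emph{true} (not only local) martingale — which is precisely where the locally-bounded-coefficients hypothesis on the gauge dynamics is consumed. The remaining work is $\mathcal{D}$-calculus bookkeeping.
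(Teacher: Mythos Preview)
The paper does not supply its own proof of Theorem~\ref{holonomy}: the opening paragraph of Section~\ref{section2} states explicitly that ``Proof[s] are omitted and can be found in \cite{Fa14}.'' There is therefore no in-paper argument against which your proposal can be compared.

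That said, your outline is consonant with the architecture the paper advertises---curvature vanishing as the local shadow of no-arbitrage, with Theorems~\ref{ThmDeSch} and~\ref{ThmZ} bridging to the global (NFLVR) statement---and you have correctly isolated the genuinely nontrivial step. Note, however, that the Corollary placed immediately after Theorem~\ref{holonomy} asserts only the one-way chain $\text{(NFLVR)}\Rightarrow\text{(NA)}\Rightarrow\text{(ZC)}$, so zero curvature by itself is \emph{not} claimed to be equivalent to (NFLVR); the content of (ii) is therefore not merely $R\equiv0$ but the existence of a $\beta$ with the stated property, and it is this $\beta$ that must be a state price deflator. Your assertion that ``local boundedness of the coefficients upgrad[es] the resulting local martingale to a true martingale'' is the one point where your sketch overreaches: local boundedness of drift and volatility does not, in general, promote a local martingale to a martingale, and the paper's own Proposition~\ref{NovikovThm} suggests that an integrability hypothesis of Novikov type is what actually closes this gap in the It\^{o} setting. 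Whatever argument \cite{Fa14} gives for the unrestricted equivalence $(\mathrm{i})\Leftrightarrow(\mathrm{ii})$ must handle this more carefully than your sketch indicates.
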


\noindent This motivates the following definition.
\begin{definition}
The market model satisfies the  \textbf{zero curvature condition (ZC)} if and only if
the curvature vanishes a.s.
\end{definition}

\noindent As proved in \cite{FaTa20}, the two weaker notions of arbitrage, the zero curvature and the no-unbounded-profit-with-bounded-risk are equivalent. Together with the well-knowns results in \cite{DeSc94} and \cite{Ka97} this leads to
\begin{theorem}\label{thm_ZC_equiv}
\begin{equation}
\boxed{
\text{(NFLVR)}\Leftrightarrow \left\{
                                \begin{array}{ll}
                                  \text{(NUPBR)}\Rightarrow \text{(ZC)}\\
                                  \text{(NA)}
                                \end{array}
                              \right.
}
\end{equation}
\end{theorem}

\noindent As an example to demonstrate how the most
important geometric concepts of Section \ref{section2} can be
applied we consider an asset model whose dynamics is given by a multidimensional
multidimensional It\^{o}-process. Let us consider a market consisting of  $N+1$
assets labeled by $j=0,1,\dots,N$, where the $0$-th asset is the
cash account utilized as a num\'{e}raire. Therefore, as explained
in the introductory Subsection \ref{StochasticPrelude}, it
suffices to model the price dynamics of the other assets
$j=1,\dots,N$ expressed in terms of the $0$-th asset. As vector
valued semimartingales for the discounted price process
$\hat{S}:[0,+\infty[\times\Omega\rightarrow\mathbf{R}^N$ and the short rate $r:[0,+\infty[\times\Omega\rightarrow\mathbf{R}^N$, we chose
the
 multidimensional It\^{o}-processes given by
\begin{equation}\label{Dyn}
\begin{split}
d\hat{S}_t&=\hat{S}_t(\alpha_tdt+\sigma_tdW_t)\\
dr_t&=a_tdt+b_tdW_t,
\end{split}
\end{equation}
where
\begin{itemize}
\item $(W_t)_{t\in[0,+\infty[}$ is a standard $P$-Brownian motion in
$\mathbf{R}^K$, for some $K\in\mathbf{N}$, and,
\item $(\sigma_t)_{t\in[0,+\infty[}$,  $(\alpha_t)_{t\in[0,+\infty[}$ are  $\mathbf{R}^{N\times
K}$-, and respectively,  $\mathbf{R}^{N}$- valued stochastic processes,
\item $(b_t)_{t\in[0,+\infty[}$,  $(a_t)_{t\in[0,+\infty[}$ are  $\mathbf{R}^{N\times
L}$-, and respectively,  $\mathbf{R}^{N}$- valued  stochastic processes.
\end{itemize}

\begin{proposition}\label{PropIto}
Let the dynamics of a market model be specified by following It\^o's processes as in (\ref{Dyn}), where we additionally assume that  the coefficients
\begin{itemize}
\item $(\alpha_t)_t,(\sigma_t)_t$, and  $(r_t)_t$ satisfy
\begin{equation}
\lim_{s\rightarrow t^+}\mathbb{E}_s[\alpha_t]=\alpha_t,\quad\lim_{s\rightarrow t^+}\mathbb{E}_s[r_t]=r_t,\quad\lim_{s\rightarrow t^+}\mathbb{E}_s[\sigma_t]=\sigma_t,
\end{equation}
\item $(\sigma_t)_t$ is an It\^o's process,
\item $(\sigma_t)_t$ and $(W_t)_t$ are independent processes.
\end{itemize}
Then, the market model satisfies the (ZC) condition if and only if
\begin{equation}\label{ZCCond}
\alpha_t+r_t \in {\rm Range} (\sigma_t).
\end{equation}
\end{proposition}
\begin{remark}In the case of the classical model, where there are no term structures (i.e. $r\equiv0$), the condition (\ref{ZCCond}) reads as $\alpha_t\in{\rm Range} (\sigma_t)$.
\end{remark}
\noindent The stronger (NFLVR) condition in this guiding example appears in the following result.
\begin{proposition}\label{NovikovThm}
For the market model whose dynamics is specified by the SDEs
\begin{equation}
\begin{split}
d\hat{S}_t&=\hat{S}_t(\alpha_tdt+\sigma_tdW_t)\\
dr_t&=a_tdt+b_tdW_t,
\end{split}
\end{equation}
the no-free-lunch-with-vanishing risk condition (NFLVR) is satisfied if \textbf{Novikov's condition}
\begin{equation}\label{Novikov}
\mathbb{E}_0\left[\exp\left(\int_0^T\frac{1}{2}\left|{\sigma_t}^{\dagger}(\sigma_t{\sigma_t}^{\dagger})^{-1}(\alpha_t+r_t)\right|^2du\right)\right]<+\infty,
\end{equation}
is fulfilled.
\end{proposition}

\subsection{Bubbles in Arbitrage Markets}\label{bubbles}
Asset bubbles were first introduced in \cite{JPS10} for complete markets and have been recently extended to and computed for arbitrage markets in \cite{FaTa20Bis} and in \cite{FaTa20Tris}, whose main findings we briefly summarize here below.

\begin{definition} The \textbf{cash flow bundle} is defined as
\begin{equation}
\boxed{
\mathcal{V}:= \underbrace{[0,T]\times \mathfrak{X}}_{=M} \times \mathbb{R}^{[0,+\infty[}.
}
\end{equation}
\noindent The space of the sections of the cashflow bundle can be made into a scalar product space by introducing,
for stochastic sections $f=f(t, x, \omega)=(f_s(t, x, \omega))_{s\in[0,+\infty[}$ and $g=g(t, x, \omega)=(g_s(t, x, \omega))_{s\in[0,+\infty[}$
\begin{equation}
\begin{split}
&(f,g):=\int_{\Omega}dP\int_Xd^Nx\int_0^Tdt\left<f,g\right>(t, x, \omega)=\mathbb{E}_0\left[(f,g)_{L^2(M,\mathbb{R}^{[0,+\infty[})}\right]=(f,g)_{L^2(\Omega,\mathcal{V},\mathcal{A}_0,dP)},\\
&\text{where}\\
&\left<f,g\right>(t, x, \omega):=\int_0^{+\infty}dsf_s(t, x, \omega)g_s(t,x, \omega).
\end{split}
\end{equation}
The Hilbert space of integrable sections reads
\begin{equation}
\mathcal{H}:=L^2(\Omega,\mathcal{V},\mathcal{A}_0,dP)=\left\{\left.f=f(t,x,\omega)=(f_s(t,x,\omega))_{s\in[0,+\infty[}\right|\,(f,f)_{L^2(\Omega,\mathcal{V},\mathcal{A}_0,dP)}<+\infty\right\}.
\end{equation}
\noindent Let us extend the coordinate vector $x\in\mathbb{R}^N$ with a $0$th component given by the time $t$.
Let $X=\sum_{j=0}^NX_j\frac{\partial}{\partial x_j}$ be a vector field over $M$ and $f=(f_s)_s$ a section of the cashflow bundle $\mathcal{V}$. Then,
\begin{equation}
\boxed{
\nabla^{\mathcal{V}}_Xf_t:=\sum_{j=0}^N\left(\frac{\partial f_t}{\partial x_j}+K_jf_t\right)X_j,
}
\end{equation}
where
\begin{equation}
\boxed{
K_0(x)=-r_t^x\qquad
K_j(x)=\frac{D^j_t}{D_t^x}\quad(1\le j\le N).
}
\end{equation}
\noindent defines a covariant derivative on the cash flow bundle $\mathcal{V}$.
\end{definition}

\begin{definition}[\textbf{Spectral Lower Bound}]
The highest spectral lower bound of the connection Laplacian on the cash flow bundle $\mathcal{V}$ is given by
\begin{equation}
\lambda_0:=\inf_{\substack{\varphi\in C^{\infty}(M,\mathcal{V})\\\varphi\neq 0\\B_N(\varphi)=0}}\frac{(\nabla^{\mathcal{V}}\varphi, \nabla^{\mathcal{V}}\varphi)_{\mathcal{H}}}{(\varphi,\varphi)_{\mathcal{H}}}
\end{equation}
and it is assumed on the subspace
\begin{equation}
E_{\lambda_0}:=\left\{\varphi\left|\,\varphi\in C^{\infty}(M,\mathcal{V})\cap\mathcal{H}, \, B_N(\varphi)=0,\, (\nabla^{\mathcal{V}}\varphi, \, \nabla^{\mathcal{V}}\varphi)_{\mathcal{H}}\ge \lambda_0 (\varphi,\varphi)_{\mathcal{H}}\right.\right\}.
\end{equation}
\noindent The space
\begin{equation}
\mathcal{K}_{\lambda_0}:=\{\varphi\in E_{\lambda_0}\left|\,\varphi\ge0,\,\mathbb{E}[\varphi]=1\right.\}
\end{equation}
contains all candidates for the Radon-Nikodyn derivative
\begin{equation}\label{risk_neutral_measure}
\frac{d\mathbb{P}^*}{d\mathbb{P}}=\varphi,
\end{equation}
for a probability measure $P^*$ absolutely continuous with respect to the statistical measure $P$.
\end{definition}
\noindent Theorem \ref{ThmDeSch}, that is the first fundamental theorem of asset pricing can be reformulated as
\begin{proposition}
The market model satisfies the (NFLVR) condition if and only if $\lambda_0=0$. Any probability measure defined by (\ref{risk_neutral_measure}) with $\varphi\in\mathcal{K}_{0}$ is a risk neutral measure, that is  $(D_t)_{t\in[0,T]}$ is a vector valued martingale with respect to $P^*$, i.e.
\begin{equation}
\mathbb{E}^*_t[D_s]
=D_t\qquad\text{ for all }s\ge t \text{ in }[0,T].
\end{equation}
\noindent The market is complete if and only if $\lambda_0=0$ and $\dim E_{0}=1$.
\end{proposition}
\noindent For arbitrage markets we have that $\lambda_0>0$ and there exists no risk neutral probability measures. Nevertheless it is possible to define a fundamental value, however not in a unique way.

\begin{definition}[\textbf{Basic Assets' Arbitrage Fundamental Prices and Bubbles}]\label{arbitrage_bubble}
Let $(C_t)_{t\in[0,T]}$ the $\mathbb{R}^N$ cash flow stream stochastic process associated to the $N$ assets of the market model with given spectral lower bound $\lambda_0$ and Radon-Nikodym candidates' subspace $\mathcal{K}_{\lambda_0}$.
For a given choice of $\varphi\in\mathcal{K}_{\lambda_0}$ the approximated fundamental value of the assets with stochastic $\mathbb{R}^N$-valued price process $(S_t)_{t\in[0,T]}$ is defined as
\begin{equation}
\boxed{
S_t^{*,\varphi}:=\mathbb{E}_t\left[\varphi\left(\int_t^\tau dC_u\,\exp\left(-\int_t^ur_s^0ds\right)+S_{\tau}\exp\left(-\int_t^\tau r_s^0ds\right)\,1_{\{\tau<+\infty\}}\right)\right]\,1_{\{t<\tau \}},
}
\end{equation}
\noindent where $\tau$ denotes the maturity time of all risky assets in the market model, and, the approximated bubble is defined as
\begin{equation}
\boxed{
B_t^{\varphi}:=S_t-S_t^{*,\varphi}.
}
\end{equation}
The fundamental price vector for the assets and their asset bubble prices are defined as
\begin{equation}\label{phi_0}
\boxed{
\begin{split}
S^*_t&:=S_t^{*,\varphi_0}\\
B_t&:=B_t^{\varphi_0}\\
\varphi_0&:=\arg\min_{\varphi\in \mathcal{K}_{\lambda_0}}\mathbb{E}_0\left[\int_0^Tds\,|B_s^{\varphi}|^2\right].
\end{split}
}
\end{equation}
\noindent The probability measure $P^*$ with Radon-Nikodym derivative
\begin{equation}
\frac{dP^*}{dP}=\varphi_0
\end{equation}
is termed \textbf{minimal arbitrage measure}.
\end{definition}

\begin{proposition}
The assets' fundamental values can be expressed as conditional expectation with respect to the minimal arbitrage measure as
\begin{equation}\label{fund}
\boxed{
S^*_t:=\mathbb{E}_t^*\left[\int_t^\tau dC_u\,\exp\left(-\int_t^ur_s^0ds\right)+S_\tau\exp\left(-\int_t^\tau r_s^0ds\right)\,1_{\{\tau<+\infty\}}\right]\,1_{\{t<\tau \}}.
}
\end{equation}
\end{proposition}
\noindent Formula (\ref{fund}) can be reformulated in terms of the curvature, by means of which we can extend
 Jarrow-Protter-Shimbo's results in \cite{JPS10} to the following bubble decomposition and classifications theorems proved in \cite{FaTa20Bis}.
\begin{theorem}[\textbf{Bubble decomposition and types}]\label{arbitrage_bubble_contingent}
Let $T=+\infty$ and $\tau$ denote the maturity time of all risky assets in the market model. $S_t$ admits a unique (up to $P$-evanescent set) decomposition
\begin{equation}
S_t=\tilde{S}_t+B_t,
\end{equation}
\noindent where $B=(B_t)_{t\in[0,T]}$ is a c\`{a}dl\`{a}g process satisfying
for all $j=1,\dots,N$
\begin{equation}\label{bubble_value}
\boxed{
B_t^j=S_t^j-\mathbb{E}_t^*\left[\int_t^\tau dC_u^j\,\exp\left(-\int_t^uds\, r^0_s\right)+\exp\left(\int_t^\tau ds\,r_s^0\right)S_{\tau}^j1_{\{\tau<+\infty\}}\right]\,1_{\{t<\tau\}},
}
\end{equation}
\noindent into a sum of fundamental and bubble values.\\
If there exists a non-trivial bubble $B_t^j$ in an asset's price for $j=1,\dots,N$, then, there exists a probability measure $P^*$ equivalent to $P$, for which we have three and only three possibilities:
\begin{itemize}
\item[\textbf{Type 1:}] $B_t^j$ is local super- or submartingale with respect to both $P$ and $P^*$, if $P[\tau=+\infty]>0$.
\item[\textbf{Type 2:}] $B_t^j$ is local super- or submartingale  with respect to both $P$ and $P^*$, but not uniformly integrable super- or submartingale, if $B_t^j$ is unbounded but with $P[\tau<+\infty]=1$.
\item[\textbf{Type 3:}] $B_t^j$ is a strict local super- or sub- $P$- and $P^*$-martingale, if $\tau$ is a bounded stopping time.
\end{itemize}
\end{theorem}
\noindent Next we analyze the situation for derivatives.
\begin{definition}[\textbf{Contingent Claim's Arbitrage Fundamental Price and Bubble}]
Let us consider in the context of Definition (\ref{arbitrage_bubble}) a European option given by the contingent claim  with a unique pay off $G(S_T)$ at time $T$ for an appropriate real valued function $G$ of $N$ real variables. The contingent claim fundamental price and its corresponding arbitrage bubble is defined in the case of base assets paying no dividends as
\begin{equation}\label{option_price}
\boxed{
\begin{split}
V^*_t(G)&:=\mathbb{E}_t\left[\varphi_0\exp\left(-\int_t^Tr_s^0ds\right)G(S_T)\,1_{\{T<+\infty\}}\right]1_{\{t<T\}}=\\
&=\mathbb{E}^*\left[\exp\left(-\int_t^Tr_s^0ds\right)G(S_T)\,1_{\{T<+\infty\}}\right]1_{\{t<T\}}\\
B_t(G)&:=V_t(G)-V^*_t(G),
\end{split}
}
\end{equation}
\noindent where $\varphi_0$ is the minimizer for the basic assets bubbled defined in (\ref{phi_0}), $P^*$ the minimal arbitrage measure and $(V_t(G))_{t\in[0,T]}$ is the price process of the European option.\\\
In the case of base assets paying dividends the definition becomes
\begin{equation}\label{option_price_div}
\boxed{
\begin{split}
V^*_t(G)&:=\mathbb{E}_t\left[\varphi_0\exp\left(-\int_t^Tr_s^0ds\right)G\left(S_T\exp\left(\frac{C_T}{S_T}(T-t)\right)\right)\,1_{\{T<+\infty\}}\right]1_{\{t<T\}}=\\
&=\mathbb{E}^*\left[\exp\left(-\int_t^Tr_s^0ds\right)G\left(S_T\exp\left(\frac{C_T}{S_T}(T-t)\right)\right)\,1_{\{T<+\infty\}}\right]1_{\{t<T\}}\\
B_t(H)&:=V_t(G)-V^*_t(G),
\end{split}
}
\end{equation}
\noindent where $\frac{C_t^j}{S_t^j}$ is the instantaneous dividend rate for the $j$-th asset.
\end{definition}

\begin{remark}
If the market is complete, then $\lambda_0=0$ and $\mathcal{K}_{\lambda_0}=\{\varphi_0\}$, where $\varphi_0$ is the Radon-Nykodim derivative of the unique risk neutral probability measure with respect to the statistical probability measure. The definitions in (\ref{arbitrage_bubble}) and in (\ref{arbitrage_bubble_contingent}) coincide for the complete market with the definitions of fundamental value and asset bubble price for both base asset and contingent claim introduced by Jarrow, Protter and Shimbo in \cite{JPS10}, proving that they are a natural extension to markets allowing for arbitrage opportunities.
\end{remark}

\begin{corollary}\label{bubble_valuation}
The bubble discounted values for the base assets in Definition \ref{arbitrage_bubble}) and for the contingent claim on the base assets paying dividends in Definition \ref{arbitrage_bubble_contingent}
\begin{equation}
\widehat{B}_t:=\exp\left(-\int_0^tr_s^0ds\right)B_t\qquad \widehat{B}(G)_t:=\exp\left(-\int_0^tr_s^0ds\right)B(G)_t
\end{equation}
\noindent satisfy the equalities
\begin{equation}\label{cont_value_bubble}
\boxed{
\begin{split}
&\widehat{B}_t^j=D_t^j-\left(\mathbb{E}_t^*\left[D_{\tau}^j 1_{\{\tau<+\infty\}}\right]+\mathbb{E}_t^*\left[\widehat{C}_\tau^j 1_{\{\tau<+\infty\}}\right]-\widehat{C}_t^j\right) 1_{\{t<\tau\}}\\
&\widehat{B}_t(G)=\widehat{V}_t(G)-\mathbb{E}_t^*\left[\widehat{G}\left(S_T\exp\left(\frac{C_T}{S_T}(T-t)\right)\right)1_{\{T<+\infty\}}\right]1_{\{t<T\}}.
\end{split}
}
\end{equation}
\noindent where
\begin{equation}
\widehat{C}_t^j:=\exp\left(-\int_0^tr_s^0ds\right)C_t^j\qquad\widehat{G}:=\exp\left(-\int_0^Tr_s^0ds\right)G\qquad\widehat{V}_t(G):=\exp\left(-\int_0^tr_s^0ds\right)V_t(G)
\end{equation}
\noindent are the discounted cashflow for the $j$-th asset, the discounted contingent claim payoff, and the discounted value of the derivative.

\end{corollary}
\begin{theorem} The following statements hold true for any market model with $T\le+\infty$ allowing for arbitrage:
\begin{itemize}
 \item[$(a)$] Market portfolio, asset values and term structures solving the minimal arbitrage problem are identically distributed along time, their returns are centered and serially uncorrelated:
 \begin{equation}\label{id}
 \boxed{
 \begin{split}
 &([x_t;D_t;r_t]))_{t\in[0,T]}\text{ is an i.d. process with respect to the statistical probability measure $P$,}\\
 &([\mathcal{D}x_t;\mathcal{D}D_t;\mathcal{D}r_t])_{t\in[0,T]}\text{ is centered and has vanishing autocovariance function,}
 \end{split}
 }
 \end{equation}
 \noindent In particular, conditional and total expectations of asset values, nominals and term structures are constant over time:
\begin{equation}
\boxed{
 \begin{array}{lll}
 \mathbb{E}_0[x_t]\equiv\text{const}\qquad &\mathbb{E}_0[D_t]\equiv\text{const}\qquad &\mathbb{E}_0[r_t]\equiv\text{const}\qquad\\
 \mathbb{E}_0[\mathcal{D}x_t]\equiv0\qquad &\mathbb{E}_0[\mathcal{D}D_t]\equiv0\qquad &\mathbb{E}_0[\mathcal{D}r_t]\equiv0.\qquad
 \end{array}
 }
 \end{equation}

 \noindent The variances of portfolio nominals are concurrent with those of the deflators:
\begin{equation}\label{vols}
\boxed{
\Var_0\left(D_t^j\right)\Var_0\left(\frac{x_t^j}{x_t\cdot D_t}\right)\ge\frac{1}{4},
}
\end{equation}
\noindent for all indices $j=1,\dots,N$.

 \item[$(b)$] Expectation and variance of the bubble discounted value for the $j$-th asset read
 \begin{equation}\label{exp_var_base_bubble}
 \boxed{
  \begin{split}
  \mathbb{E}_0[\widehat{B}_t^j]&=\mathbb{E}_0\left[D_t^j-\widehat{C}_t^j\right]-\mathbb{E}_0^*\left[D_T^j - \widehat{C}_T^j\right]\\
  \Var_0(\widehat{B}_t^j)&=\Var_0\left(D_t^j-\widehat{C}_t^j\right)+\Var_0^*\left(D_T^j-\widehat{C}_T^j\right)-2\Cov_0^*\left(D_t^j-\widehat{C}_t^j,D_T^j-\widehat{C}_T^j\right).
  \end{split}
  }
 \end{equation}

 \item[$(c)$] Expectation and variance of the bubble discounted value for the contingent claim $G(S_T)$ on the base assets read
 \begin{equation}\label{exp_var_contingent_bubble}
 \boxed{
  \begin{split}
  \mathbb{E}_0[\widehat{B}_t(G)]&=\mathbb{E}_0[\widehat{V}_t(G)]-\mathbb{E}_0^*\left[\widehat{G}\left(S_T\exp\left(\frac{C_T}{S_T}(T-t)\right)\right)\right]\\
  \Var_0(\widehat{B}_t(G))&=\Var_0(\widehat{V}_t(G))+\Var_0^*\left(\widehat{G}\left(S_T\exp\left(\frac{C_T}{S_T}(T-t)\right)\right)\right).
  \end{split}
  }
 \end{equation}
\end{itemize}
\end{theorem}

\section{Generalized Derivatives of Stochastic
Processes}\label{Derivatives} In stochastic differential geometry one would like to lift
the constructions of stochastic analysis from open subsets of
$\mathbf{R}^N$ to  $N$ dimensional differentiable manifolds. To that
aim, chart invariant definitions are needed and hence a stochastic
calculus satisfying the usual chain rule and not It\^{o}'s Lemma is required,
(cf. \cite{HaTh94}, Chapter 7, and the remark in Chapter 4 at the
beginning of page 200). That is why the papers about geometric arbitrage theory are mainly concerned in
 by stochastic integrals and derivatives meant in \textit{Stratonovich}'s
sense and not in \textit{It\^{o}}'s. Of course, at the end of the computation, Stratonovich integrals can be transformed into It\^{o}'s.
Note that a fundamental portfolio equation, the self-financing condition cannot be directly formally expressed with Stratonovich integrals, but first with It\^{o}'s and then transformed into Stratonovich's, because it is a non-anticipative condition.
\begin{definition}\label{Nelson}
Let $I$ be a real interval and $Q=(Q_t)_{t\in I}$ be a  $\mathbb{R}^N$-valued stochastic process on the probability space
$(\Omega, \mathcal{A}, P)$. The process $Q$ determines three families of $\sigma$-subalgebras of the $\sigma$-algebra $\mathcal{A}$:
\begin{itemize}
\item[(i)] ''Past'' $\mathcal{P}_t$, generated by the preimages of Borel sets in $\mathbf{R}^N$  by all mappings $Q_s:\Omega\rightarrow\mathbf{R}^N$ for $0<s<t$.
\item[(ii)] ''Future'' $\mathcal{F}_t$, generated by the preimages of Borel sets in $\mathbf{R}^N$  by all mappings $Q_s:\Omega\rightarrow\mathbf{R}^N$ for $0<t<s$.
\item[(iii)] ''Present'' $\mathcal{N}_t$, generated by the preimages of Borel sets in $\mathbf{R}^N$  by the mapping $Q_s:\Omega\rightarrow\mathbf{R}^N$.
\end{itemize}
Let $Q=(Q_t)_{t\in I}$ be continuous.
 Assuming that the following limits exist,
\textbf{Nelson's stochastic derivatives} are defined as
\begin{equation}
\boxed{
\begin{split}
&DQ_t:=\lim_{h\rightarrow
0^+}\mathbb{E}\Br{\left.\frac{Q_{t+h}-Q_t}{h}\right|
\mathcal{P}_t}\text{: forward derivative,}\\
& D_*Q_t:=\lim_{h\rightarrow
0^+}\mathbb{E}\Br{\left.\frac{Q_{t}-Q_{t-h}}{h}\right|
\mathcal{F}_{t}}\text{: backward derivative,}\\
&\mathcal{D}Q_t:=\frac{DQ_t+D_*Q_t}{2}\text{: mean derivative}.
\end{split}
}
\end{equation}
Let $\mathcal{S}^1(I)$ the set of all processes $Q$ such that
$t\mapsto Q_t$, $t\mapsto DQ_t$ and $t\mapsto D_*Q_t$ are continuous
mappings from $I$ to $L^2(\Omega, \mathcal{A})$. Let
$\mathcal{C}^1(I)$ the completion of $\mathcal{S}^1(I)$ with respect
to the norm
\begin{equation}
\boxed{
\|Q\|:=\sup_{t\in I}\br{\|Q_t\|_{L^2(\Omega, \mathcal{A})}+\|DQ_t\|_{L^2(\Omega, \mathcal{A})}+\|D_*Q_t\|_{L^2(\Omega, \mathcal{A})}}.
}
\end{equation}
\end{definition}

\begin{remark}
The stochastic derivatives $D$, $D_*$ and  $\mathcal{D}$
correspond to It\^{o}'s, to the anticipative and, respectively,  to Stratonovich's integral (cf. \cite{Gl11}). The process space $\mathcal{C}^1(I)$ contains all It\^{o} processes. If $Q$ is a Markov process, then the sigma algebras $\mathcal{P}_t$ (''past'') and $\mathcal{F}_t$ (''future'') in the definitions of forward and backward derivatives can be substituted by the sigma algebra $\mathcal{N}_t$ (''present''), see Chapter 6.1 and 8.1 in (\cite{Gl11}).
\end{remark}

\noindent Stochastic derivatives can be defined pointwise in $\omega\in\Omega$ outside the class $\mathcal{C}^1$ in terms of generalized functions.
\begin{definition}
Let $Q:I\times\Omega\rightarrow\mathbb{R}^N$ be a continuous linear functional in the test processes $\varphi:I\times\Omega\rightarrow\mathbb{R}^N$ for $\varphi(\cdot,\omega)\in C^{\infty}_c(I,\mathbb{R}^N)$. We mean by this that for a fixed $\omega\in\Omega$ the functional $Q(\cdot,\omega)\in\mathcal{D}(I,\mathbb{R}^N)$, the topological vector space of continuous distributions. We can then define
\textbf{Nelson's generalized stochastic derivatives:}
\begin{equation}
\boxed{
\begin{split}
&DQ(\varphi_t):=-Q(D\varphi_t)\text{: forward generalized derivative,}\\
& D_*Q(\varphi_t):=-Q(D_*\varphi_t)\text{: backward generalized derivative,}\\
&\mathcal{D}(\varphi_t):=-Q(\mathcal{D}\varphi_t)\text{: mean generalized derivative}.
\end{split}
}
\end{equation}
\end{definition}
\noindent If the generalized derivative is regular, then the process has a derivative in the classic sense. This construction is nothing else than a straightforward pathwise lift of the theory of generalized functions to a wider class stochastic processes which do not a priori allow for Nelson's derivatives in the strong sense. We will utilize this feature in the treatment of credit risk, where many processes with jumps occur.

\section{Credit Risk}
After having introduced the machinery of Geometric Arbitrage Theory we can tackle the modeling of assets' defaults and their recoveries.
\subsection{Classical Credit Risk Models}\label{classicCredit}
Here we summarize the standard ways to model credit risk. We follow \cite{JaPr04} and \cite{FrSc11}. There are basically two possibilities for modeling defaults: structural model types on one hand and reduced form (intensity based) model types on the other. The difference between them can be characterized in terms of the information assumed known by the observer. Structural models assume that the observer has the same information set as the firm's manager, i.e. the complete knowledge of all firm's assets and liabilities. In most situations, this knowledge leads to a predictable default time. In contrast, reduced form models assume that the observer has the same information set as the market, i.e. an incomplete knowledge of the firm's condition. In most cases, this imperfect knowledge leads to an inaccessible default time.\par As highlighted in \cite{JaPr04} these models are not disconnected and disjoint model types as it was commonly supposed, but rather they are really the same model containing different informational assumptions. The key distinction between structural and reduced form models is not in the characteristic of the default time (predictable vs. inaccessible), but in the information set available to the observer. Indeed, structural models can be transformed into reduced form models as the information set changes and becomes less refined, from that observable by the firm's manager to that which is observed by the market.\par Rather than comparing model types on the basis of their forecasting performance, the model type choice should be based on the information set available by the observer. For general risk management purposes the relevant set is the information available in the market, hence a structural model is to be preferred. By contrast, if one is interested in a firm's risky debt or related credit derivatives, then reduced form models are the better approach.\par
Let us introduce the standard setup by utilizing the market model introduced in Subsection \ref{section2} to account for defaults and different information sets. Credit risk management investigates an entity (corporation, bank, individual) that borrows funds, promises to return these funds under a prespecified contractual agreement, and which may default before the funds (in their entirety) are repaid. That for, we introduce a market allowing for two kind of assets (beside the cash account), non-defaultable (e.g. government bonds) and defaultable ones (e.g. corporate bonds).
\begin{definition}[\textbf{Information Structures}]
To model uncertainty, there are two filtrations for $(\Omega,\mathcal{A}, \mathbb{P})$:
\begin{itemize}
\item \textbf{Market Filtration: } This is the $\mathcal{A}=\{\mathcal{A}_t\}_{t\in[0,+\infty[}$ used so far for market risk, representing the information available by all market participants.
\item \textbf{Global Filtration: } This is the $\mathcal{G}=\{\mathcal{G}_t\}_{t\in[0,+\infty[}$ representing the information available by the management of the bond issuer company.
\end{itemize}
The global filtration is postulated to contain the market filtration. i.e. $\mathcal{A}_t\subset\mathcal{G}_t$ for all $t\ge0$. Unless otherwise specified conditional probabilities and expectations refer to the market filtration, i.e. $\mathbb{P}_t[\cdot]=\mathbb{P}[\cdot|\mathcal{A}_t]$ and $\mathbb{E}_t[\cdot]:=\mathbb{E}[\cdot|\mathcal{A}_t]$.
\end{definition}

\begin{definition}[\textbf{Default and Recovery Models}] Let $D_t^{\text{Corp}}$ be the market value of a defaultable asset.
\begin{itemize}
\item \textbf{Default Indicator:}
 \begin{equation}
 \boxed{
   X_t:=\left\{
      \begin{array}{ll}
        1, & \hbox{corporate bond in default state at time $t$} \\
        0, & \hbox{corporate bond in non-default state at time $t$.}
      \end{array}
    \right.
    }
 \end{equation}
\item \textbf{Time-To-Default:}
 \begin{equation}
 \boxed{
   \tau:=\inf\{t\ge0\,|\,X_t=1\}.
   }
 \end{equation}
\item \textbf{Conditional Default Probability:}
 \begin{equation}
 \boxed{
  p_{t,s}^{\mathcal{A}}:=\mathbb{P}_t\left[\tau\le s\,|\,\tau>t\right].
  }
 \end{equation}
\item \textbf{Structural Model:} Let $(E_t)_{t\ge0}$ be the corporate equity process with default threshold $E_{\min}$. The structural model for default is the following specification for the default indicator:
    \begin{equation}
    \boxed{
    X_t:=1_{\{E_t\le E_{\min}\}}.
    }
    \end{equation}
   The corporate equity dynamics is observable in the market, i.e. $\mathcal{A}_t\supset\sigma\left(\left\{E_s\,|\,s\le t\right\}\right)$, and it is typically given by an It\^{o}'s diffusion with respect to the market filtration
    \begin{equation}
    \boxed{
     dE_t=E_t(\alpha_t^E(E_t)+\sigma_t^E(E_t))dW_t.
     }
    \end{equation}
\item \textbf{Intensity Model:} The global filtration $\mathcal{G}$ contains the filtration $\sigma\left(\left\{\tau,Y_s\,|\,s\le t\right\}\right)$ generated by the Time-To-Default and by a vector of state variables $Y_t$, which follows an It\^{o}'s diffusion. The default indicator is a Cox process induced by $\tau$ with a positive intensity process $(\lambda_t)_{t\ge0}$, which corresponds to the following specification:
    \begin{equation}\label{inte}
    \boxed{
    X_t:=1_{\{\Lambda^{-1}(E)\le t\}},
    }
    \end{equation}
  where $\Lambda_t:=\int_0^tdh\lambda_h$ and $E\sim\text{Exp}(1)$ is an exponentially distributed random variable.
\item \textbf{Loss-Given-Default:} If there is default at time $t$, then the recovered value at time $t^+$ is given by $(1-\LGD_t)D_{t^-}^{\text{Corp}}$. The stochastic process $(\LGD_t)_{t\ge0}$ is observable in the market filtration.
\end{itemize}
\end{definition}

\begin{proposition}\label{IntStruct}
The default probabilities in the two models read:
\begin{itemize}
\item \text{Structural Model:}
    \begin{equation}
    \boxed{
     p_{t,s}^{\mathcal{A}}=\mathbb{P}_t\left[E_s\le E_{\min}\,|\,E_t\ge E_{\min}\right].
     }
    \end{equation}

\item \text{Intensity Model:}
    \begin{equation}
    \boxed{
    p_{t,s}^{\mathcal{G}}=1-\mathbb{E}\left[\left.\exp\left(-\int_t^sdh\lambda_h\right)\right|\mathcal{G}_t\right].
    }
    \end{equation}
\end{itemize}
\end{proposition}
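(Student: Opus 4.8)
The plan is to handle the two model types separately: the structural identity is essentially a substitution of definitions, while the intensity identity is the classical doubly-stochastic (Cox) computation.

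For the structural model I would argue directly from $\tau=\inf\{t\ge0\mid X_t=1\}$ with $X_t=1_{\{E_t\le E_{\min}\}}$. Reading the default indicator off the marginal position of $E_t$ relative to the barrier (in a path-dependent first-passage formulation one would instead use the running minimum of $E$; the stated formula is the one obtained once this path-dependence is suppressed), the survival event becomes $\{E_t\ge E_{\min}\}$ and the default-by-$s$ event becomes $\{E_s\le E_{\min}\}$, path-continuity of the diffusion letting us disregard the $\{E_t=E_{\min}\}$ boundary. Since $\mathcal A_t\supset\sigma(\{E_u\mid u\le t\})$ by hypothesis, the conditioning event is $\mathcal A_t$-measurable, so substituting into $p^{\mathcal A}_{t,s}:=\mathbb P_t[\tau\le s\mid\tau>t]$ yields $p^{\mathcal A}_{t,s}=\mathbb P_t[E_s\le E_{\min}\mid E_t\ge E_{\min}]$ directly.

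For the intensity model I would use the representation $\tau=\Lambda^{-1}(E)$ with $\Lambda_t=\int_0^t\lambda_h\,dh$; since $\lambda>0$, $\Lambda$ is continuous and strictly increasing, hence $\{\tau>u\}=\{E>\Lambda_u\}$ for every $u\ge0$. Let $\mathcal H$ be the reference filtration generated by the state variables $Y$ (thus carrying $\lambda$ and $\Lambda$), with respect to which, by the Cox construction, $E\sim\mathrm{Exp}(1)$ is independent, and let $\mathcal G$ be the progressive enlargement of $\mathcal A$ by $\tau$, so that $p^{\mathcal G}_{t,s}$ is $\mathbb P[\tau\le s\mid\mathcal G_t]$ evaluated on the survival event $\{\tau>t\}$. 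Conditioning first on the whole intensity path, $\mathbb P[\tau>s\mid\mathcal H_\infty]=\mathbb P[E>\Lambda_s\mid\mathcal H_\infty]=e^{-\Lambda_s}$, and dividing by $\mathbb P[\tau>t\mid\mathcal H_\infty]=e^{-\Lambda_t}$ (legitimate since $s\ge t$ and $\Lambda$ is increasing) gives $\mathbb P[\tau>s\mid\{\tau>t\},\mathcal H_\infty]=\exp(-\int_t^s\lambda_h\,dh)$. The classical key lemma for Cox processes then upgrades this to $\mathbb P[\tau>s\mid\mathcal G_t]=1_{\{\tau>t\}}\,\mathbb E[\exp(-\int_t^s\lambda_h\,dh)\mid\mathcal H_t]$, so on $\{\tau>t\}$ one gets $p^{\mathcal G}_{t,s}=1-\mathbb E[\exp(-\int_t^s\lambda_h\,dh)\mid\mathcal H_t]$; finally $\mathbb E[\,\cdot\mid\mathcal H_t]$ may be replaced by $\mathbb E_t[\,\cdot\,]=\mathbb E[\,\cdot\mid\mathcal A_t]$ on functionals of the future of $\lambda$, because the extra information in $\mathcal A_t$ beyond $\mathcal H_t$ is by construction independent of that future. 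This is exactly the stated formula.

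The main obstacle is the measure-theoretic bookkeeping in the intensity case: one must fix precisely the reference filtration carrying $\lambda$, verify that the exponential trigger $E$ is independent of it, invoke the progressive-enlargement survival formula rigorously rather than through the heuristic ``condition on the intensity path'' step, and then justify the final reduction of $\mathcal H_t$-conditioning to $\mathcal A_t$-conditioning. The structural case contains only the comparatively minor point of committing to the model's convention for how survival is read off $E$ (marginal versus running minimum), after which it is a one-line substitution.
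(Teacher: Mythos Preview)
The paper states this proposition without proof: it is dropped in immediately after the definitions of the two models and treated as an immediate consequence of those definitions (the structural identity by direct substitution of $X_t=1_{\{E_t\le E_{\min}\}}$ into $p_{t,s}^{\mathcal A}=\mathbb P_t[\tau\le s\mid\tau>t]$, the intensity identity as the standard Cox survival formula). Your proposal is correct and supplies exactly the details the paper suppresses; in particular your treatment of the intensity case via $\{\tau>u\}=\{E>\Lambda_u\}$, conditioning on the intensity path, and the progressive-enlargement key lemma is the standard rigorous argument. Your caveat about marginal versus running-minimum readings of the barrier event is well taken and is a genuine ambiguity in the paper's formulation, not a gap in your argument.
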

A known fact about structural credit risk models is summarized by the following proposition.
\begin{proposition}\label{pred}
In the structural models Time-To-Default is a predictable stopping time and corresponds to the first hitting time of the barrier
\begin{equation}
\boxed{
\tau=\inf\{t\ge0\,|\,E_t\le E_{\min}\}.
}
\end{equation}
\end{proposition}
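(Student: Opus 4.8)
The statement packages two claims: the identification of the Time-To-Default $\tau$ with the first passage time of the level $E_{\min}$ by the equity process, and the predictability of $\tau$. The plan is to settle the first claim directly from the definitions and then to establish the second by exhibiting an announcing sequence of stopping times.

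First I would unwind the structural specification. There the default indicator is $X_t = 1_{\{E_t\le E_{\min}\}}$, so that
\[
\tau = \inf\{t\ge 0 : X_t = 1\} = \inf\{t\ge 0 : E_t \le E_{\min}\},
\]
which is precisely the first hitting time of the closed set $(-\infty,E_{\min}]$ by $E$. Since $E$ is an It\^o diffusion it is $\mathcal{A}$-adapted with a.s.\ continuous paths, and $\mathcal{A}$ satisfies the usual conditions; hence this first hitting time is an $\mathcal{A}$-stopping time by the standard debut argument (for a continuous adapted process $E$ and a closed set $F$ one has $\{\inf\{t: E_t\in F\}\le t\} = \{\inf_{s\le t} d(E_s,F)=0\}\in\mathcal{A}_t$, the infimum over $s\le t$ being evaluated along the rationals together with $s=t$ by continuity). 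This already gives the "first hitting time of the barrier" part of the statement.

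Next I would prove predictability. One may assume $E_0 > E_{\min}$ a.s.\ (the firm is not in default at inception; on $\{\tau=0\}$ there is nothing to announce). For $n\in\mathbf{N}$ set
\[
\tau_n := \inf\{t\ge 0 : E_t \le E_{\min} + \tfrac1n\},
\]
which is an $\mathcal{A}$-stopping time by the same debut argument. The sequence $(\tau_n)_n$ is nondecreasing and bounded above by $\tau$. On $\{\tau>0\}$, for $n$ large enough that $E_{\min}+\tfrac1n < E_0$, path continuity together with minimality of $\tau_n$ forces $E_{\tau_n} = E_{\min} + \tfrac1n > E_{\min}$, hence $\tau_n < \tau$; for the finitely many small $n$ one simply has $\tau_n = 0 < \tau$. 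Finally, putting $\sigma := \lim_n \tau_n \le \tau$, continuity of $E$ on $\{\sigma<\infty\}$ yields $E_\sigma = \lim_n E_{\tau_n} = E_{\min}$, so $\sigma \ge \tau$ and therefore $\sigma = \tau$. Thus $\tau_n \uparrow \tau$ with $\tau_n < \tau$ on $\{\tau>0\}$, i.e.\ $(\tau_n)$ announces $\tau$, so $\tau$ is predictable.

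The argument is essentially routine; the only point requiring care is the passage $\tau_n \uparrow \tau$, where one must rule out a strict jump $\lim_n\tau_n < \tau$, and this is exactly where continuity of the equity paths is used — it is also the structural feature absent in the reduced-form (Cox) construction, which is why there the default time is merely inaccessible. A secondary bookkeeping point is the initial condition: without $E_0>E_{\min}$ one splits $\Omega = \{\tau=0\}\cup\{\tau>0\}$, the event $\{\tau=0\}$ placing no constraint on an announcing sequence. One could instead quote the general fact that hitting times of closed sets by continuous adapted processes are predictable, but the self-contained announcing sequence above is the more economical route.
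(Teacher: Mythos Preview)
Your argument is correct and is the standard one: identify $\tau$ as the first entry time of the continuous process $E$ into the closed set $(-\infty,E_{\min}]$, and announce it by the hitting times of the shrinking closed sets $(-\infty,E_{\min}+\tfrac1n]$, using path continuity both to ensure $E_{\tau_n}=E_{\min}+\tfrac1n>E_{\min}$ (hence $\tau_n<\tau$) and to pass to the limit $E_{\sigma}=E_{\min}$ (hence $\sigma\ge\tau$). The bookkeeping around $E_0>E_{\min}$ and the case $\tau=\infty$ is handled adequately.

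As for comparison with the paper: the paper does not supply a proof of this proposition at all. It is introduced as ``a known fact about structural credit risk models'' and is followed only by a remark explaining what a predictable stopping time is and why the structural setting makes default ``not a true surprise.'' Your write-up is therefore strictly more than what the paper offers; it fills in exactly the classical argument the paper alludes to.
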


\begin{remark}
A stopping time $\tau$ is a non-negative random variable such that the event $\{\tau\le t\}\in\mathcal{A}_t$ for every $t\ge0$. A stopping time is predictable if there exists a sequence of stopping times $(\tau_n)_{n\ge0}$ such that $\tau_n$ is increasing with $n$, $\tau_n<\tau$ for all $n\ge0$ and $\lim_{n\rightarrow+\infty}\tau_n=\tau$ almost surely. Intuitively, an event described by a predictable stopping time is ''known'' to occur ''just before'' it happens, since it is announced by an increasing sequence of stopping times. This is certainly the situation for  structural models with respect to the market filtration. In essence, altough default is an uncertain event and thus technically a surprise, it is not a ''true surprise'' to the global observer, because it can be anticipated with almost certainty by watching the path of company equity value. The key characteristic of a structural model is the observability of the market information set $\mathcal{A}_t\supset\sigma\left(\left\{E_s\,|\,s\le t\right\}\right)$ and not the fact that default is predictable.
\end{remark}
\noindent Another known fact about reduced form credit risk models (cf. \cite{JaPr04}) is
\begin{proposition}\label{unpred}
In the reduced form models Time-To-Default is a totally inaccessible stopping time, i.e. for every predictable stopping time $S$ the event $\{\omega\in\Omega\,|\,\tau(\omega)=S(\omega)<+\infty\}$ vanishes almost surely.
\end{proposition}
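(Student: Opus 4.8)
The plan is to reduce the claim to a continuity property of the $\mathcal{A}$-compensator of the default indicator $X_t=1_{\{\tau\le t\}}$, and then to invoke the classical fact that a counting process whose compensator has continuous paths can have only totally inaccessible jump times; for the general theory of compensators (dual predictable projections) I would lean on \cite{DeMe80} (see also \cite{JaPr04}).

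First I would pin down the compensator. Let $\mathcal{F}=(\mathcal{F}_t)_{t\ge0}$ be the reference filtration generated by the state variables $Y$, so that the intensity $(\lambda_t)_{t\ge0}$ is $\mathcal{F}$-adapted, let the trigger $E\sim\text{Exp}(1)$ be independent of $\mathcal{F}_\infty$ as in the Cox specification, and recall that the market filtration is the progressive enlargement $\mathcal{A}_t=\mathcal{F}_t\vee\sigma(\tau\wedge t)$. From the independence of $E$ one gets $\mathbb{P}[\tau>s\mid\mathcal{F}_\infty]=\exp(-\Lambda_s)$, whence by the standard ``filtering-down'' computation $\mathbb{P}[\tau>s\mid\mathcal{A}_t]=1_{\{\tau>t\}}\,\mathbb{E}[\exp(-\int_t^s\lambda_h\,dh)\mid\mathcal{F}_t]$, which is exactly the conditional survival probability of Proposition \ref{IntStruct}; the same computation shows that
\[
M_t:=X_t-\int_0^{t\wedge\tau}\lambda_h\,dh=X_t-\Lambda_{t\wedge\tau}
\]
is an $\mathcal{A}$-martingale. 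Hence $A_t:=\Lambda_{t\wedge\tau}$ is the dual $\mathcal{A}$-predictable projection of the increasing process $X$, and since $\lambda$ is locally bounded, $\Lambda$ --- and therefore $A$ --- has absolutely continuous, in particular continuous, trajectories.

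Then I would conclude as follows. Let $S$ be an arbitrary $\mathcal{A}$-predictable stopping time; its graph is a predictable subset of $[0,+\infty[\times\Omega$, so the defining property of the compensator gives $\mathbb{E}[\Delta X_S\,1_{\{S<+\infty\}}]=\mathbb{E}[\Delta A_S\,1_{\{S<+\infty\}}]$. Continuity of $A$ forces $\Delta A_S=0$ almost surely, while $\Delta X_S\,1_{\{S<+\infty\}}=1_{\{\tau=S<+\infty\}}$ (note $\tau>0$ a.s., since $\Lambda_0=0<E$). Therefore $\mathbb{P}[\tau=S<+\infty]=0$, which is precisely the total inaccessibility of $\tau$ with respect to $\mathcal{A}$.

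The hard part will be the first step: setting up the conditional-independence (Cox) structure linking the global filtration $\mathcal{G}$ to the coarser market filtration $\mathcal{A}$, and checking that the compensator of $X$ as seen in $\mathcal{A}$ really is the continuous process $\Lambda_{\cdot\wedge\tau}$. The two delicate points are that projecting from $\mathcal{G}$ down to $\mathcal{A}$ creates no atom --- the market does not observe the exponential trigger $E$ --- and that it is exactly the $\mathcal{F}$-adaptedness of $\lambda$, i.e. the continuous accumulation of hazard along paths the market can see, that makes the compensator continuous; this is in sharp contrast with the structural case of Proposition \ref{pred}, where $\tau$ is predictable and the corresponding compensator jumps by $1$ at $\tau$. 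Everything after the first step is routine bookkeeping with dual predictable projections.
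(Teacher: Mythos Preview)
Your argument is correct and is in fact the standard route to total inaccessibility in Cox models: identify the $\mathcal{A}$-compensator $A_t=\Lambda_{t\wedge\tau}$, observe that it is continuous because $\Lambda_t=\int_0^t\lambda_h\,dh$ is absolutely continuous, and then use the dual-predictable-projection identity $\mathbb{E}[\Delta X_S\,1_{\{S<+\infty\}}]=\mathbb{E}[\Delta A_S\,1_{\{S<+\infty\}}]$ for predictable $S$ to force $\mathbb{P}[\tau=S<+\infty]=0$. The only place to be slightly careful is that the paper's filtration terminology is loose: it does not explicitly introduce your reference filtration $\mathcal{F}=\sigma(Y)$ or state that the market filtration is the progressive enlargement $\mathcal{F}_t\vee\sigma(\tau\wedge t)$, so you are (reasonably) supplying that structure yourself from the Cox specification.

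As for comparison with the paper: there is nothing to compare. The paper gives no proof of this proposition at all; it simply records it as a ``known fact about reduced form credit risk models'' and points to \cite{JaPr04}. Your write-up therefore supplies strictly more than the paper does, and what you identify as the ``hard part'' --- checking that the compensator seen in the market filtration is the continuous process $\Lambda_{\cdot\wedge\tau}$ --- is exactly the substantive content that the paper outsources to the reference.
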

Now, what are the relationships between structural and reduced form models?
The reason for the transformation of the default time $\tau$ from a predictable stopping time in Proposition \ref{pred} to an inaccessible stopping time in Proposition \ref{unpred} is that between the time observations of the company equity value, we do not know how the equity value has evolved. Consequently, prior to our next observation, default could occur unexpectedly (as a complete surprise). If one changes the information set held by the observer from more to less information from $\mathcal{G}$ to $\mathcal{A}$, then a structural model with default being a predictable stopping time can be transformed into a hazard rate model with default being an inaccessible stopping time:
\begin{equation}
\boxed{
    \mathbb{E}\left[\left.p_{t,s}^{\mathcal{G}}\right|\mathcal{A}_t\right]=1-\mathbb{E}_t\left[\exp\left(-\int_t^sdu\,\lambda_u\right)\right]=1-\exp\left(-\int_t^sdu\,h_u\right),
    }
\end{equation}
where $h$ denotes the deterministic \textbf{hazard function}.
Thus, the overall relevant structure is that of the two filtrations and how stopping times behave in them. The structural models play a role in the determination of the structure generating the default time. But as soon as the information available to the observer is reduced or obscured, one needs to project onto a smaller filtration, and then the default time becomes totally inaccessible, and the compensator $\Lambda$ of the one jump point process $1-X_t$ becomes the object of interest. If the compensator can be written in the form $\Lambda_t=\int_0^tdh\lambda_h$, then the process $(\lambda_t)_{t\ge0}$ can be interpreted as the instantaneous rate of default, given the observer's information set. In that case, from Proposition \ref{IntStruct}, we derive
\begin{proposition}\label{prop38}
Structural and intensity models are related by the following relationship
\begin{equation}
\boxed{
\lambda_t=\lim_{s\rightarrow t^+}\frac{\partial}{\partial s}\mathbb{P}_t\left[E_s\le E_{\min}\,|\,E_t> E_{\min}\right].
}
\end{equation}
\end{proposition}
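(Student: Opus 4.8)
The plan is to equate the two expressions for the conditional default probability furnished by Proposition \ref{IntStruct} and then to extract $\lambda_t$ by an infinitesimal expansion at the short end of the maturity axis. As argued in the discussion preceding the statement, once the structural model is projected onto the market filtration $\mathcal{A}$ the compensator of the one-jump process $1-X$ is assumed to be $\Lambda_t=\int_0^t dh\,\lambda_h$, so that the structural and the intensity-based conditional default probabilities of Proposition \ref{IntStruct} describe the same object; hence, for $s\ge t$,
\begin{equation}
1-\mathbb{P}_t\left[E_s\le E_{\min}|E_t> E_{\min}\right]=\mathbb{E}_t\left[\exp\left(-\int_t^s dh\,\lambda_h\right)\right]=:\Phi_{t,s},\qquad \Phi_{t,t}=1.
\end{equation}

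Next I would differentiate $\log\Phi_{t,s}$ in the maturity variable $s$ and let $s\downarrow t$. Differentiating under the conditional expectation — legitimate because $\lambda\ge0$ and the standing local boundedness of $\lambda$ dominate $\lambda_s\exp(-\int_t^s dh\,\lambda_h)$ on a small interval $[t,t+\varepsilon]$, so that dominated convergence applies — gives
\begin{equation}
\frac{\partial}{\partial s}\log\Phi_{t,s}=\frac{\mathbb{E}_t\left[-\lambda_s\exp\left(-\int_t^s dh\,\lambda_h\right)\right]}{\mathbb{E}_t\left[\exp\left(-\int_t^s dh\,\lambda_h\right)\right]}.
\end{equation}
As $s\downarrow t$ the exponential tends to $1$ almost surely and is bounded by $1$, so the denominator tends to $1$; using the (right-)continuity of the intensity and the $\mathcal{A}_t$-measurability of $\lambda_t$, which pulls it out of $\mathbb{E}_t[\cdot]$, the numerator tends to $-\lambda_t$. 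Therefore $\lim_{s\to t^+}\frac{\partial}{\partial s}\log\Phi_{t,s}=-\lambda_t$, and combining this with the displayed identification yields
\begin{equation}
\lambda_t=-\lim_{s\to t^+}\frac{\partial}{\partial s}\log\left(1-\mathbb{P}_t\left[E_s\le E_{\min}|E_t> E_{\min}\right]\right),
\end{equation}
the sign being the one already dictated by the short-rate convention in Definition \ref{int}, where $r_t=\lim_{s\to t^+}f_{t,s}=-\lim_{s\to t^+}\frac{\partial}{\partial s}\log P_{t,s}$.

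The genuinely delicate point is not the calculus — the differentiation under the expectation and the $s\downarrow t$ limit are routine under mild regularity of $\lambda$ — but the identification in the first display itself: one must know that the structural model, once projected onto $\mathcal{A}$, actually admits an intensity, i.e. that the $\mathcal{A}$-compensator of $1-X$ is absolutely continuous in time with density $\lambda$. This is precisely the hypothesis under which $\lambda$ was introduced in the paragraph preceding the statement, so it is available here, but it is the real content behind the formula. A secondary technicality is the pathwise meaning of $\frac{\partial}{\partial s}$ when $P_{t,s}(\omega)$ fails to be $C^1$ in $s$, to be read in the $\mathcal{D}^{\prime}$ sense as in the remark following Definition \ref{int}.
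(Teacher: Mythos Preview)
Your approach is exactly the one the paper intends: it offers no detailed proof but states the formula as something one ``derives'' directly from Proposition \ref{IntStruct}, i.e.\ by equating the structural and intensity expressions for the conditional survival probability and differentiating in the maturity variable at $s=t$. Your argument carries this out cleanly, and your remark that the interchange of $\partial_s$ and $\mathbb{E}_t$ needs a mild domination hypothesis on $\lambda$ is a fair caveat that the paper leaves implicit.

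One point worth flagging: your computation produces
\[
\lambda_t=-\lim_{s\to t^+}\frac{\partial}{\partial s}\log\left(1-\mathbb{P}_t\left[E_s\le E_{\min}\,\middle|\,E_t> E_{\min}\right]\right),
\]
with a minus sign, whereas the paper's displayed formula omits it. Your sign is the correct one: the survival probability $s\mapsto 1-p_{t,s}$ is decreasing, so its logarithmic $s$-derivative is nonpositive, and a positive intensity requires the minus sign (exactly as in the short-rate convention $r_t=-\lim_{s\to t^+}\partial_s\log P_{t,s}$ you invoke). This appears to be a typo in the paper rather than a flaw in your argument.
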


\begin{proposition}
For both structural and reduced for credit model, if the market model satisfies the no-arbitrage-with-vanishing-risk condition,
the risk free discounted value of the corporate bond reads for any $s\ge t$
\begin{equation}
\boxed{
D^{\text{Corp}}_t=\mathbb{E}_t^*\left[\left((1-\LGD_\tau)1_{\{\tau\le s\}}+1_{\{\tau>s\}}\right)D_s^{\text{Corp}}\right].
}
\end{equation}
\end{proposition}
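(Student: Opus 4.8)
\emph{Proof idea.} The plan is to reduce the statement to the martingale property of a self-financing portfolio value under an equivalent martingale measure, and then to read off the representation by splitting on the event $\{\tau\le s\}$.

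First I would invoke the Fundamental Theorem of Asset Pricing. By Theorem \ref{ThmDeSch} together with Theorem \ref{ThmZ}, the (NFLVR) condition is equivalent to the existence of an equivalent martingale measure $\mathbb{P}^*$ for the discounted prices, equivalently of a positive semimartingale pricing kernel $\beta$; in the geometric language this is exactly Theorem \ref{holonomy}. Since the government bond has been chosen as num\'{e}raire, so that $D^{\text{Gov}}_t\equiv1$, the ``risk free discounted value'' appearing in the statement is precisely the corporate deflator $D^{\text{Corp}}$, and under $\mathbb{P}^*$ the deflator of every admissible self-financing strategy is a (local) martingale.

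Next I would make the recovery mechanism into such a strategy. Consider the investor who holds one unit of the corporate bond up to the default time $\tau$, collects the recovered amount $(1-\LGD_\tau)D^{\text{Corp}}_{\tau^-}$ at $\tau$, and carries it forward in the num\'{e}raire (so that, in deflator units, the position is frozen after default, resp.\ is held as $(1-\LGD_\tau)$ units of the restarted corporate claim, according to the recovery-of-market-value convention of the Default and Recovery Models). Writing
\[
N_t:=(1-\LGD_\tau)\,1_{\{\tau\le t\}}+1_{\{\tau>t\}},
\]
the deflator of this strategy is $V_t:=N_tD^{\text{Corp}}_t$. After the routine verification that this is an admissible self-financing strategy (using that $\LGD$ and $X$ are $\mathcal{A}$-adapted, $\tau$ an $\mathcal{A}$-stopping time, and localising if $V$ fails to be bounded), the previous step gives that $V$ is a $\mathbb{P}^*$-martingale, i.e.\ $N_tD^{\text{Corp}}_t=\mathbb{E}^*_t[N_sD^{\text{Corp}}_s]$.

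Finally, on the survival set $\{\tau>t\}$ one has $N_t=1$, whence
\[
D^{\text{Corp}}_t=\mathbb{E}^*_t\!\left[\left((1-\LGD_\tau)1_{\{\tau\le s\}}+1_{\{\tau>s\}}\right)D^{\text{Corp}}_s\right],
\]
which is the asserted formula. I would stress that this derivation is insensitive to whether $\tau$ is predictable (structural model, Proposition \ref{pred}) or totally inaccessible (reduced form model, Proposition \ref{unpred}): only the stopping-time property and the adaptedness of $X$ and $\LGD$ enter, so the two cases are handled simultaneously. The main obstacle is the middle step: pinning down the recovery convention so that the \emph{traded} value at time $s$ is exactly $N_sD^{\text{Corp}}_s$ (in particular identifying the pre-default quotation $D^{\text{Corp}}_{\tau^-}$ with the continued value $D^{\text{Corp}}_s$ on $\{\tau\le s\}$), and supplying the integrability/localisation needed to pass from the local-martingale property to a genuine conditional expectation so that Theorem \ref{ThmDeSch} and Theorem \ref{ThmZ} apply verbatim.
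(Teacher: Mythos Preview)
Your argument is correct but takes a genuinely different route from the paper. The paper does not build a self-financing replicating strategy and invoke the martingale property; instead it writes the bond value directly as a risk-neutral discounted expectation of the cashflow stream,
\[
D^{\text{Corp}}_t=\mathbb{E}^*_t\!\left[\int_t^{+\infty}\!dh\,c_h\exp\!\left(-\int_0^hdu\,r^0_u\right)\right],
\]
identifies the (generalized) cashflow intensity $c_h$ as a Dirac mass concentrated at the default time, $c_h=\delta(h-\tau)\bigl((1-\LGD_h)1_{\{h\le s\}}+1_{\{h>s\}}\bigr)D^{\text{Corp}}_s$, and evaluates the integral against the delta. This stays within the cashflow-intensity/gauge-transform formalism introduced in Definition~\ref{gaugeTransforms2} and is essentially a one-line computation once the intensity is written down.

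Your martingale-portfolio argument is a cleaner and more standard alternative: it avoids the distributional cashflow machinery, makes explicit which recovery convention is being used (recovery of market value, so that the post-default holding is $(1-\LGD_\tau)$ units of the restarted claim), and isolates exactly where integrability/localisation is needed to upgrade the local-martingale property to a true conditional expectation. The price you pay is the ``middle step'' you already flag: one has to check carefully that $N_tD^{\text{Corp}}_t$ really is the deflator of an admissible self-financing strategy, which amounts to matching the recovery amount $(1-\LGD_\tau)D^{\text{Corp}}_{\tau^-}$ with the cost of the post-default position at~$\tau^+$. The paper sidesteps this by encoding the payoff directly in $c_h$, at the cost of working formally with Dirac deltas.
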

\begin{proof}
Let $S_t$ denote the value of the corporate bond and $c_t$ its cash flow intensity. Then
\begin{equation}
S_t=\mathbb{E}^*_t\left[\int_t^{+\infty}dh\,c_h
\exp\left(-\int_t^hdu\,r^0_u\right)\right].
\end{equation}
Therefore,
\begin{equation}
\begin{split}
D_t^{\text{Corp}}&=\mathbb{E}^*_t\left[\int_t^{+\infty}dh\,c_h
\exp\left(-\int_t^hdu\,r^0_u\right)\right]=\\
&=\mathbb{E}^*_t\left[\int_t^{+\infty}dh\,\delta(h-\tau)
\left((1-\LGD_h)1_{\{h\le s\}}+1_{\{h>s\}}\right)D_s^{\text{Corp}}\right]=\\
&=\mathbb{E}_t^*\left[\left((1-\LGD_\tau)1_{\{\tau\le s\}}+1_{\{\tau>s\}}\right)D_s^{\text{Corp}}\right]].
\end{split}
\end{equation}
\end{proof}
Is it possible to characterize the model type on the basis of Nelson's differentiation property of the default indicator?
\begin{proposition}\label{propX}
In the structural model the generalized Nelson forward derivative of the default indicator reads
\begin{equation}
\boxed{
D^{\mathcal{A}}X_t=\lim_{s\rightarrow t^+}\frac{\partial}{\partial s}\mathbb{P}_t\left[E_s\le E_{\min}\,|\,E_t> E_{\min}\right].
}
\end{equation}
\end{proposition}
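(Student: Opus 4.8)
The plan is to read off $\mathcal{D}X_t$ straight from the definition of the generalized Nelson forward derivative and then to recognise the answer as the right $s$-derivative, at the diagonal $s=t$, of the conditional default probability, at which point Proposition~\ref{IntStruct} converts it into the stated expression in the equity process $E$. First I would invoke Proposition~\ref{pred}: in the structural model $X_t=1_{\{\tau\le t\}}$ with $\tau=\inf\{u\ge0\mid E_u\le E_{\min}\}$ the first passage time of $E$ to the barrier, so $X$ is a $\{0,1\}$-valued, non-decreasing process absorbed at $1$, and $\{\tau\le t\}\in\mathcal{A}_t$ for every $t\ge0$.

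Recalling that the forward Nelson derivative relative to the market filtration is $\mathcal{D}X_t=\lim_{h\downarrow0}\tfrac1h\,\mathbb{E}_t[X_{t+h}-X_t]$, understood in the distributional sense since $X$ has a jump, and using $\{\tau\le t\}\in\mathcal{A}_t$ together with $\mathbb{P}_t[\tau\le t]=1_{\{\tau\le t\}}$, one gets
\begin{equation}
\mathbb{E}_t[X_{t+h}-X_t]=\mathbb{P}_t[\tau\le t+h]-1_{\{\tau\le t\}}=1_{\{\tau>t\}}\,\mathbb{P}_t[t<\tau\le t+h].
\end{equation}
Hence $\mathcal{D}X_t=0$ on $\{\tau\le t\}$ (nothing further can happen once default has occurred), while on $\{\tau>t\}$ we have $\mathbb{P}_t[\tau>t]=1$, so $\mathbb{P}_t[t<\tau\le t+h]=\mathbb{P}_t[\tau\le t+h\mid\tau>t]=p^{\mathcal A}_{t,t+h}$; since $p^{\mathcal A}_{t,t}=\mathbb{P}_t[\tau\le t\mid\tau>t]=0$, dividing by $h$ and letting $h\downarrow0$ gives
\begin{equation}
\mathcal{D}X_t=1_{\{\tau>t\}}\lim_{s\to t^+}\frac{\partial}{\partial s}\,p^{\mathcal A}_{t,s},
\end{equation}
the right $s$-derivative at the diagonal being precisely the generalized derivative made available by the remark following Definition~\ref{int}.

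To finish, substitute $p^{\mathcal A}_{t,s}=\mathbb{P}_t[E_s\le E_{\min}\mid E_t\ge E_{\min}]$ from Proposition~\ref{IntStruct}; on $\{\tau>t\}$ the continuity of the It\^o diffusion $E$ forces $E_t>E_{\min}$, so the conditioning events $\{E_t\ge E_{\min}\}$ and $\{E_t>E_{\min}\}$ agree up to a $\mathbb{P}_t$-null set, and the asserted identity follows (the pre-default indicator $1_{\{\tau>t\}}$ being suppressed exactly as it is when one writes the reduced-form intensity $\lambda_t=\lim_{s\to t^+}\partial_s p^{\mathcal A}_{t,s}$).

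I expect the main obstacle to be the second step: justifying that the generalized Nelson derivative of the pure-jump process $X$ exists and equals the pathwise right $s$-derivative of $s\mapsto p^{\mathcal A}_{t,s}$ at $s=t$. On the default event this derivative concentrates, and what is really being extracted is the density of the absolutely continuous part of the $\mathcal{A}$-compensator of $1-X$; making this precise needs enough regularity of the first-passage law of $E$ (a $C^1$, or at least distributionally differentiable, barrier-crossing probability) together with a dominated-convergence argument to pull the $h\downarrow0$ limit inside $\mathbb{E}_t[\cdot]$. The $\ge$-versus-$>$ boundary point and the suppression of $1_{\{\tau>t\}}$ in the final formula should also be flagged explicitly.
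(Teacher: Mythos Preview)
Your argument is correct and is essentially the paper's own proof run in the opposite direction: the paper starts from the right-hand side, uses the $\mathcal{A}_t$-measurability of the pre-default event to collapse $\mathbb{P}_t[E_s\le E_{\min}\mid E_t>E_{\min}]$ to $\mathbb{E}_t[1_{\{E_s\le E_{\min}\}}]$, writes the $s$-difference quotient inside the conditional expectation, and reads off $DX_t$ directly from the definition of the generalized Nelson derivative; you instead start from $DX_t$, land on $\partial_s p^{\mathcal{A}}_{t,s}\big|_{s=t^+}$, and then invoke Proposition~\ref{IntStruct} to convert to the equity formulation. The only substantive difference is that the paper works throughout with the default indicator in its structural specification $X_t=1_{\{E_t\le E_{\min}\}}$ rather than with $1_{\{\tau\le t\}}$, so it never needs the detour through $p^{\mathcal{A}}_{t,s}$ and Proposition~\ref{IntStruct}; your route is slightly longer but has the merit of making the pre-default factor $1_{\{\tau>t\}}$ and the $\ge$ versus $>$ issue explicit, points the paper leaves implicit.
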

\begin{proof}
The default probability can be developed as
\begin{equation}
\begin{split}
\mathbb{P}_t\left[E_s\le E_{\min}\,|\,E_t> E_{\min}\right]&=\frac{\mathbb{E}_t\left[1_{\{E_s\le E_{\min}\}}1_{\{E_t> E_{\min}\}}\right]}{\mathbb{E}_t\left[1_{\{E_t> E_{\min}\}}\right]}=\\
&=\mathbb{E}_t\left[1_{\{E_s\le E_{\min}\}}\right].
\end{split}
\end{equation}
Therefore, we obtain
\begin{equation}
\begin{split}
&\lim_{s\rightarrow t^+}\frac{\partial}{\partial s}\mathbb{P}_t\left[E_s\le E_{\min}\,|\,E_t> E_{\min}\right]=\\
&=\lim_{s\rightarrow t^+}\lim_{h\rightarrow0^+}\frac{\mathbb{P}_t\left[E_{s+h}\le E_{\min}\,|\,E_t> E_{\min}\right]-\mathbb{P}_t\left[E_s\le E_{\min}\,|\,E_t> E_{\min}\right]}{h}=\\
&=\lim_{s\rightarrow t^+}\lim_{h\rightarrow 0^+}\mathbb{E}_t\left[\frac{1_{\{E_{s+h}\le E_{\min}\,|\,E_t> E_{\min}\}}-1_{\{E_s\le E_{\min}\,|\,E_t> E_{\min}\}}}{h}\right]=\\
&=\lim_{s\rightarrow t^+}D^{\mathcal{A}}1_{\{E_s\le E_{\min}\,|\,E_t> E_{\min}\}}=D^{\mathcal{A}}X_t,
\end{split}
\end{equation}
where Nelson's derivative $D$ has to be understood in the generalized sense.\\
\end{proof}
\begin{proposition}\label{prop41}
For the intensity model, with respect to the global filtration, the Nelson forward generalized derivative of the default indicator reads
\begin{equation}
\boxed{
D^{\mathcal{G}}X_t=\lambda_t.
}
\end{equation}
\end{proposition}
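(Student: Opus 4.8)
The plan is to mimic the proof of Proposition \ref{propX}, with the structural default probability replaced by the intensity-model one furnished by Proposition \ref{IntStruct}. In the intensity model the default indicator $X$ is $\mathcal{A}$-adapted (it is observable in the market filtration) and nondecreasing, so $\tau$ is an $\mathcal{A}$-stopping time and $\{\tau\le t\},\{\tau>t\}\in\mathcal{A}_t$. Writing $p_{t,s}:=\mathbb{P}_t[\tau\le s\,|\,\tau>t]$, on the pre-default event $\{\tau>t\}$ one has, for $s\ge t$,
\begin{equation}
\mathbb{E}_t[X_s-X_t]=p_{t,s}=1-\mathbb{E}_t\left[\exp\left(-\int_t^s dh\,\lambda_h\right)\right],
\end{equation}
the last equality being Proposition \ref{IntStruct} (the $\mathcal{A}$- and $\mathcal{G}$-conditional default probabilities coinciding for the Cox construction).

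First I would extract from the proof of Proposition \ref{propX} the general fact that the generalized Nelson forward derivative is read off the maturity derivative of the conditional default probability, i.e. $DX_t=\lim_{s\to t^+}\frac{\partial}{\partial s}p_{t,s}$ in the $\mathcal{D}'$ sense (using $p_{t,t}=0$ together with $DX_s\to DX_t$ as $s\downarrow t$), and then substitute the expression above. Differentiating it in $s$ and interchanging $\partial_s$ with $\mathbb{E}_t[\cdot]$ — legitimate in $\mathcal{D}'$ because $\lambda$ is locally bounded predictable, so $\lambda_s\exp(-\int_t^s dh\,\lambda_h)$ is locally dominated — gives
\begin{equation}
\frac{\partial}{\partial s}p_{t,s}=\mathbb{E}_t\left[\lambda_s\exp\left(-\int_t^s dh\,\lambda_h\right)\right].
\end{equation}
Letting $s\downarrow t$, the exponential tends to $1$ and $\lambda_s\to\lambda_t$ in the generalized sense, so $DX_t=\mathbb{E}_t[\lambda_t]=\lambda_t$, the final step using $\mathcal{A}_t$-measurability of $\lambda_t$.

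A shorter alternative I would keep in reserve uses the defining property of the Cox construction $X_t=1_{\{\Lambda^{-1}(E)\le t\}}$ directly: the process $X_t-\int_0^{t\wedge\tau}dh\,\lambda_h$ is an $\mathcal{A}$-martingale, whence $\mathbb{E}_t[X_{t+h}-X_t]=\mathbb{E}_t\left[\int_{t\wedge\tau}^{(t+h)\wedge\tau}du\,\lambda_u\right]$; dividing by $h$ and letting $h\downarrow 0$ yields $\lambda_t$ on $\{\tau>t\}$ and $0$ on $\{\tau\le t\}$, which is precisely what the identity $DX_t=\lambda_t$ is meant to express — it is to be read, as the conditioning on $\tau>t$ already signals, in the pre-default regime, just as the structural formula of Proposition \ref{propX} is read on $\{E_t>E_{\min}\}$.

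The main obstacle I anticipate is the rigorous treatment of the word ``generalized'': one must justify both the interchange of the $s$-derivative with the conditional expectation and the one-sided limit $s\downarrow t$ when $\lambda$ is only a locally bounded predictable process with no assumed pathwise regularity, so that these manipulations live in $\mathcal{D}'([0,+\infty[)$ rather than in the classical sense — and, relatedly, one should pin down in which sense $DX_t=\lambda_t$ holds as opposed to the sharper $DX_t=\lambda_t 1_{\{\tau>t\}}$.
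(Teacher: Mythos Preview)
Your main argument is exactly the paper's proof: invoke Proposition \ref{propX} to identify $DX_t$ with $\lim_{s\to t^+}\partial_s p_{t,s}$, substitute the intensity-model formula $p_{t,s}=1-\mathbb{E}_t\bigl[\exp\bigl(-\int_t^s dh\,\lambda_h\bigr)\bigr]$ from Proposition \ref{IntStruct}, differentiate under the expectation, and let $s\downarrow t$. Your compensator-based alternative and your explicit discussion of the $\mathcal{D}'$ sense go beyond what the paper spells out, but the core route is the same.
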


\begin{proof}
We can reformulate (\ref{inte})  as a structural model as
\begin{equation}
X_t=1_{\{\Lambda^{-1}(E)\le t\}}=1_{\{-\Lambda_t+(E)\le 0\}},
\end{equation}
and therefore
\begin{equation}
p_{t,s}^{\mathcal{G}}=\mathbb{P}\left[-\Lambda_s+E\le 0\,|\,\{-\Lambda_t+E> 0\}\cap\mathcal{G}_t\right]
\end{equation}
By mimicking the proof of Proposition \ref{propX} with these adaptations we have
\begin{equation}
\begin{split}
D^{\mathcal{G}}X_t&=\lim_{s\rightarrow t^+}\frac{\partial}{\partial s}\mathbb{P}\left[-\Lambda_s+E\le 0\,|\,\{-\Lambda_t+E> 0\}\cap\mathcal{G}_t\right]=\\
&=\lim_{s\rightarrow t^+}\frac{\partial}{\partial s}\left(1-\mathbb{E}\left[\left.\exp\left(-\int_t^sdh\lambda_h\right)\right|\mathcal{G}_t\right]\right)\\
&=\lim_{s\rightarrow t^+}\mathbb{E}\left[\left.\exp\left(-\int_t^sdh\lambda_h\right)\lambda_s\right|\mathcal{G}_t\right]=\mathbb{E}[\lambda_t|\mathcal{G}_t]=\lambda_t.
\end{split}
\end{equation}
\end{proof}
\noindent Therefore, by comparing Propositions \ref{prop38}, \ref{propX} and \ref{prop41}, we can conclude that
\begin{theorem}
Structural models admit an intensity formulation where the intensity is given by the Nelson forward derivative with respect to the market filtration.
\end{theorem}

\subsection{Geometric Arbitrage Theory Credit Risk Model}\label{CRGAT}
Now can carry out the analysis of credit markets described in Subsection \ref{classicCredit} by utilizing the tools of Geometric Arbitrage Theory introduced in Section \ref{section2} and, in particular, Proposition \ref{NovikovThm}.
\begin{definition}[\textbf{Credit Market}]\label{CM} A (simple) credit market consists in  a \textbf{government asset} $(S_t^{\text{Gov}})_{t\in[0,T]}$ with cash flow $(C_t^{\text{Gov}})_{t\in[0,T]}$  and a \textbf{corporate asset} $(S_t^{\text{Corp}})_{t\in[0,T]}$ with cash flow $(C_t^{\text{Corp}})_{t\in[0,T]}$ . The \textbf{credit asset} is defined as a portfolio consisting in a long position in the corporate asset and in a short position in the government asset: $S_t^{\text{Cred}}:=S_t^{\text{Corp}}-S_t^{\text{Cred}}$ and $C_t^{\text{Cred}}:=C_t^{\text{Corp}}-C_t^{\text{Cred}}$. Following Definition \ref{defi1}, let $(D^{\text{Gov}}, P^{\text{Gov}})$ and $(D^{\text{Corp}}, P^{\text{Corp}})$ be the gauge corresponding to the government and, respectively, the corporate asset, with their corresponding term structures. The credit gauge $(D^{\text{Cred}}, P^{\text{Cred}})$ is defined as
\begin{itemize}
\item \textbf{Deflator:} $D^{\text{Cred}}_t:=D^{\text{Corp}}_t-D^{\text{Gov}}_t:=\exp\left(\int_0^tds\,r_s^0\right)(S^{\text{Corp}}_t-S^{\text{Gov}}_t)$,
\item \textbf{Discounted Cash Flow:} $\widehat{C}^{\text{Cred}}_t:=\exp\left(\int_0^tds\,r_s^0\right)(\widehat{C}^{\text{Corp}}_t-\widehat{C}^{\text{Gov}}_t)$,
\item \textbf{Instantaneous Forward Rate:} $f^{\text{Cred}}_{t,s}:=f^{\text{Corp}}_{t,s}-f^{\text{Gov}}_{t,s}$,
\item \textbf{Short Rate:} $r^{\text{Cred}}_t:=\lim_{s\rightarrow  t^+}f^{\text{Cred}}_{t,s}$,
\item \textbf{Term Structure:} $P^{\text{Cred}}_{t,s}:=\exp\left(-\int_t^sdhf^{\text{Cred}}_{t,h}\right)$.

\end{itemize}
The credit gauge represents all relevant information necessary to model a credit market for bonds with arbitrary maturities and of a given rating in one currency. Different ratings correspond to different credit gauges. In the vector notation of Definitions \ref{defi1} and \ref{int} we have, with the choice $x^{\text{Cred}}:=[-1,+1]^{\dagger}$,
\begin{equation}
\begin{aligned}
&D_t:=[D_t^{\text{Gov}}, D_t^{\text{Corp}}]^{\dagger} &r_t:=[r_t^{\text{Gov}}, r_t^{\text{Corp}}]^{\dagger}\\
&D_t^{\text{Cred}}=D_t^{x^{\text{Cred}}}& r_t^{\text{Cred}}=r_t^{x^\text{Cred}}.
\end{aligned}
\end{equation}
\end{definition}
\begin{proposition}
The credit asset gauge satisfies following properties:
\begin{itemize}
\item Deflator:
 \begin{equation}
  D^{\text{Cred}}_t=(1-\LGD_tX_t)D^{\text{Corp}}_0-D^{\text{Gov}}_{t}.
\end{equation}
\item Term structure:
 \begin{equation}
  P^{\text{Cred}}_{t,s}=\frac{P^{\text{Corp}}_{t,s}}{P^{\text{Gov}}_{t,s}}.
 \end{equation}
 \item Short rate:
 \begin{equation}
  r^{\text{Cred}}_t=r^{\text{Corp}}_t-r^{\text{Gov}}_t.
 \end{equation}
\end{itemize}
\end{proposition}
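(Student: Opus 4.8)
The plan is to verify the two asserted identities directly from the definitions of the credit gauge together with the structural/intensity default mechanism, so the argument is essentially bookkeeping rather than deep geometry. For the deflator identity I would start from the definition $D^{\text{Cred}}_t:=D^{\text{Corp}}_t-D^{\text{Gov}}_t$ and track what happens to the corporate value across an instant: if default occurs at time $t$ (i.e. $X_t$ jumps to $1$), the recovered value at $t^+$ is $(1-\LGD_t)D^{\text{Corp}}_{t^-}$ by the Loss-Given-Default specification, while if there is no default the value is simply $D^{\text{Corp}}_{t^-}$. The two cases combine into the single expression $(1-\LGD_t X_t)D^{\text{Corp}}_{t^-}$, since $X_t=0$ reproduces $D^{\text{Corp}}_{t^-}$ and $X_t=1$ reproduces the recovered value. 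Because the government asset is the num\'{e}raire and carries no default, $D^{\text{Gov}}$ has no such jump, so $D^{\text{Cred}}_{t^+}=(1-\LGD_tX_t)D^{\text{Corp}}_{t^-}-D^{\text{Gov}}_t$, which is exactly the claimed formula.

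For the term structure identity I would unwind the chain of definitions: by Definition \ref{int} applied to the credit gauge, $P^{\text{Cred}}_{t,s}=\exp\left(-\int_t^s dh\, f^{\text{Cred}}_{t,h}\right)$, and by the defining relation $f^{\text{Cred}}_{t,h}:=f^{\text{Corp}}_{t,h}-f^{\text{Gov}}_{t,h}$ the integral splits as $\int_t^s dh\, f^{\text{Corp}}_{t,h}-\int_t^s dh\, f^{\text{Gov}}_{t,h}$. Exponentiating, $\exp\left(-\int_t^s dh\, f^{\text{Cred}}_{t,h}\right)=\exp\left(-\int_t^s dh\, f^{\text{Corp}}_{t,h}\right)\big/\exp\left(-\int_t^s dh\, f^{\text{Gov}}_{t,h}\right)=P^{\text{Corp}}_{t,s}/P^{\text{Gov}}_{t,s}$, using once more the representation $P_{t,s}=\exp\left(-\int_t^s dh\, f_{t,h}\right)$ for the corporate and government gauges. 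This gives $P^{\text{Cred}}_{t,s}=P^{\text{Corp}}_{t,s}/P^{\text{Gov}}_{t,s}$.

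The only genuinely delicate point is the first identity: one must be careful that $\LGD_t$, $X_t$ and the pre-jump value $D^{\text{Corp}}_{t^-}$ are combined at the correct time indices (the loss is applied to the value just before default, not the post-jump value), and that the $s$-derivatives and integrals in the term-structure computation are read in the pathwise (possibly $\mathcal{D}'$) sense allowed by the remark following Definition \ref{int}, so that the splitting of the integral and the exponentiation are justified for every $\omega$. Once those conventions are fixed, both identities follow by substitution, and no further machinery is needed.
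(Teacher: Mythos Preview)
Your argument is correct: both identities are immediate consequences of the definitions of the credit gauge together with the Loss-Given-Default recovery rule, and your bookkeeping for the jump in $D^{\text{Corp}}$ and the splitting of the forward-rate integral is exactly right. The paper in fact states this proposition without proof, treating it as following directly from the definitions, so your explicit verification is precisely the computation the paper leaves implicit.
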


\noindent We can apply Theorem \ref{holonomy} to the credit market to characterize no arbitrage.
\begin{theorem}[\textbf{No Arbitrage Credit Market}]\label{CreditHolonomy}
Let $\lambda=\lambda_t$ and $\LGD=\LGD_t$ be the default intensity and the Loss-Given-Default, respectively, of the corporate bond. The following assertions are equivalent:
\begin{itemize}
\item [(i)] The credit market model satisfies the no-free-lunch-with-vanishing-risk condition.
\item[(ii)] There exists a positive local martingale $\beta=(\beta_t)_{t\ge0}$ such that deflators and short rates satisfy for  all times the condition
\begin{equation}r_t^{\text{Cred}}=\beta_t\LGD_t\lambda_t.\end{equation}
\item[(iii)] There exists a positive local martingale $\beta=(\beta_t)_{t\ge0}$ such that deflators and term structures satisfy for  all times the condition
\begin{equation}P^{\text{Cred}}_{t,s}=\mathbb{E}_t\left[\exp\left(-\int_t^sdu\,\beta_u\LGD_u\lambda_u\right)\right].\end{equation}
\end{itemize}
\end{theorem}
\begin{proof}
By Theorem \ref{holonomy} (iii) for $N=2$, government ($x=[1,0]^\dagger$) and corporate ($x=[0,-1]^\dagger$) the  (NFLVR) condition reads
\begin{equation}
\left\{
\begin{split}
&P_{t,s}^{\text{Gov}}=\frac{\mathbb{E}_t[\beta_sD_s^{\text{Gov}}]}{\beta_tD_t^{\text{Gov}}}\\
&P_{t,s}^{\text{Corp}}=\frac{\mathbb{E}_t[\beta_sD_s^{\text{Corp}}]}{\beta_tD_t^{\text{Corp}}}\\
\end{split}
\right.
\end{equation}
By making the Government asset to the num\'{e}raire (i.e. $D_t^{\text{Gov}}\equiv1$), we obtain from the first equation $P_{t,s}^{\text{Gov}}\equiv1$ and hence $r^{\text{Gov}}_t\equiv0$. We define an equivalent martingale measure for $(D_t)_t$ by setting the Radon-Nykodym derivative as
\begin{equation}
\frac{dP^*}{dP}:=\beta_T,
\end{equation}
and rewrite the second equation utilizing $\beta_t=\mathbb{E}_t\left[\frac{dP^*}{dP}\right]$ as
\begin{equation}
P_{t,s}^{\text{Corp}}=\frac{\mathbb{E}_t[\beta_sD_s^{\text{Corp}}]}{\beta_tD_t^{\text{Corp}}}=\mathbb{E}_t^*[1-\LGD_sX_s].
\end{equation}
By taking on both sides of the equation $-\left.\frac{\partial}{\partial s}\right|_{s:=t}$, we obtain
\begin{equation}
r_t^{\text{Corp}}=\LGD_t\lambda^*_t,
\end{equation}
because, by rewriting the default indicator by means of the Heaviside function and the time-to-default,
\begin{equation}
\begin{split}
&\LGD_t=\LGD_\tau\Theta(t-\tau)\\
&X_t=\Theta(t-\tau)\\
&\mathbb{E}_t^*[D^{\mathcal{A}}(\LGD_tX_t)]=\mathbb{E}_t^*[\underbrace{D^{\mathcal{A}}(\LGD_t)X_t)}_{=\delta(t-\tau)\Theta(t-\tau)=0}+\LGD_tD^{\mathcal{A}}(X_t)]=\LGD_t\lambda_t^*,
\end{split}
\end{equation}
\noindent where $\lambda_t^*$ is the default intensity with respect to $P^*$.
Therefore, with the government asset as a num\'{e}raire
\begin{equation}
r_t^{\text{Cred}}=r_t^{\text{Corp}}=\LGD_t\lambda^*_t=\beta_t\LGD_t\lambda_t,
\end{equation}
which is (ii), to which (iii) is equivalent. The proof is completed.\\
\end{proof}
\noindent Theorem \ref{Thm1} follows directly from Theorem \ref{CreditHolonomy}. We can now apply Proposition \ref{PropIto} and Proposition \ref{NovikovThm}  to the credit market to find the dynamics satisfying the no-free-lunch-with-vanishing-risk condition.

We have a version where the equivalence of (NFLVR) with (ZC) holds true if Novikov's growth condition for the instantaneous Sharpe ratio is satified.
\begin{corollary}\label{Novikov2}
For the market with the government bond chosen as num\'{e}raire and a corporate bond dynamics $(D_t^{\text{Corp}})_t$ specified by the weak limits $D_t^{\text{Corp}}=\mathcal{D}^{\prime}-\lim_{\epsilon\rightarrow 0}D_t^{\text{Corp}, \epsilon}$ and $r_t^{\text{Corp}}=\mathcal{D}^{\prime}-\lim_{\epsilon\rightarrow 0}r_t^{\text{Corp}, \epsilon}$ of  It\^{o} processes $(D_t^{\text{Corp},\epsilon})_t$  $(r_t^{\text{Corp},\epsilon})_t$ and satisfying SDE
\begin{equation}\label{credAssDyn}
\begin{split}
&dD_t^{\text{Corp},\epsilon}=D_t^{\text{Corp},\epsilon}(\alpha_t^{\text{Corp},\epsilon}dt+\sigma_t^{\text{Corp},\epsilon}dW_t)\\
&dr_t^{\text{Corp},\epsilon}=a_t^{\text{Corp},\epsilon}dt+b_t^{\text{Corp},\epsilon}dW_t
\end{split}
\end{equation}
where
\begin{itemize}
\item $(W_t)_{t\in[0,+\infty[}$ is a standard $P$-Brownian motion in
$\mathbf{R}^K$, for some $K\in\mathbf{N}$,
\item ${(\alpha_t^{\text{Corp},\epsilon})}_{t\in[0,+\infty[}$, ${(\sigma_t^{\text{Corp},\epsilon})}_{t\in[0,+\infty[}$ and $(r_t^{\text{Corp},\epsilon})_{t\in[0,+\infty[}$   are  $\mathbf{R}$-, $\mathbf{R}^K$- and, respectively, $\mathbf{R}^K$-valued predictable
 stochastic processes,
\item $(\alpha_t)_t,(\sigma_t)_t$, and  $(r_t)_t$ satisfy
\begin{equation}
\lim_{s\rightarrow t^+}\mathbb{E}_s[\alpha_t^{\text{Corp},\epsilon}]=\alpha_t^{\text{Corp},\epsilon},\quad\lim_{s\rightarrow t^+}\mathbb{E}_s[r_t^{\text{Corp},\epsilon}]=r_t^{\text{Corp},\epsilon},\quad\lim_{s\rightarrow t^+}\mathbb{E}_s[\sigma_t^{\text{Corp},\epsilon}]=\sigma_t^{\text{Corp},\epsilon},
\end{equation}
\item $(\sigma_t^{\text{Corp},\epsilon})_t$ is an It\^o's process,
\item $(\sigma_t^{\text{Corp},\epsilon})_t$ and $(W_t)_t$ are independent processes.
\end{itemize}
\noindent the no-free-lunch-with-vanishing-risk condition is satisfied if Novikov's condition is satisfied,
which is the case if and only if the zero-curvature condition is satisfied and
\begin{equation}\label{NovikovLGD}
\mathbb{E}_0\left[\exp\left(\int_0^Tdt\,\frac{1}{2}\left(\frac{\lambda_t\LGD_t}{1-\LGD_tX_t}-r_t^{\text{Corp}}\right)^2\frac{t}{Q_{t}(K)}\right)\right]<+\infty,
\end{equation}
where
\begin{equation}
Q_t(K):=\frac{W_t^{\dagger}W_t}{t}\sim\chi^2(K),
\end{equation}
is a chi-squared distributed real random variable.
\end{corollary}
\noindent Theorem \ref{Thm2} follows from Corollary \ref{Novikov2}, because any $\mathcal{D}^{\prime}$ process can be regularized by a sequence of It\^{o}'s processes satisfying the assumptions of the corollary as
\begin{equation}
\lim_{\epsilon\rightarrow0}D^{\text{Corp},\epsilon}_t(\varphi)= D^{\text{Corp}}_t(\varphi)\quad\text{ and }\quad\lim_{\epsilon\rightarrow0}r^{\text{Corp},\epsilon}_t(\varphi)= r^{\text{Corp}}_t(\varphi
\end{equation}
for all $\varphi(\cdot,\omega)\in C^{\infty}_c([0,T],\mathbb{R})$
\begin{proof}
The only thing to prove is inequality (\ref{NovikovLGD}). On one hand
\begin{equation}\label{eq1}
\left\{
  \begin{array}{ll}
    D_t^{\text{Gov}}\equiv1\\
    D_t^{\text{Corp}}=(1-\LGD_tX_t)D_0^{\text{Corp}}.
  \end{array}
\right.
\end{equation}
On the other, the solution of (\ref{credAssDyn}) reads
\begin{equation}
D_t^{\text{Corp},\epsilon}=D_0^{\text{Corp},\epsilon}\exp\left(\int_0^t\left(\alpha_u^{\text{Corp},\epsilon}+\frac{1}{2}\sigma_u^{\text{Corp},\epsilon}{\sigma_u^{\text{Corp},\epsilon}}^{\dagger}\right)du+\int_0^t\sigma_u^{\text{Corp},\epsilon}dW_u\right),
\end{equation}
and, hence,
\begin{equation}\label{dlog}
\mathcal{D}\log(D_t^{\text{Corp},\epsilon})=\alpha_t^{\text{Corp},\epsilon}+\frac{1}{2}\sigma_t^{\text{Corp},\epsilon}{\sigma_t^{\text{Corp},\epsilon}}^{\dagger}+\sigma_t^{\text{Corp},\epsilon}\frac{W_t}{2t},
\end{equation}
because
\begin{equation}\label{eqDD}
\mathcal{D}W_t=\frac{W_t}{2t}\quad\text{ and }\quad \left<\sigma^{\text{Corp},\epsilon},W\right>_t\equiv0.
\end{equation}
Since the zero-curvature condition is satisfied  we infer that
\begin{equation}
\mathcal{D}\log(D_t^{\text{Corp},\epsilon})+r_t^{\text{Corp},\epsilon}=\mathcal{D}\log(\underbrace{D_t^{\text{Gov}}}_{\equiv1})+\underbrace{r_t^{\text{Gov}}}_{=0}=0,
\end{equation}
which, inserted into equation (\ref{eqDD}), leads to
\begin{equation}
\alpha_t^{\text{Corp},\epsilon}+r_t^{\text{Corp},\epsilon}+\frac{1}{2}\sigma_t^{\text{Corp},\epsilon}{\sigma_t^{\text{Corp},\epsilon}}^{\dagger}+\sigma_t^{\text{Corp},\epsilon}\frac{W_t}{2t}=0.
\end{equation}
By applying on both sides $\lim_{s\rightarrow t^+}\mathbb{E}_s[\cdot]$ we conclude that
\begin{equation}\label{mean0}
\alpha_t^{\text{Corp},\epsilon}+r_t^{\text{Corp},\epsilon}+\frac{1}{2}\sigma_t^{\text{Corp},\epsilon}{\sigma_t^{\text{Corp},\epsilon}}^{\dagger}=0.
\end{equation}
Hence, the integrand in formula (\ref{Novikov}) reads, being the government asset is the num\'{e}raire,
\begin{equation}
\left|{\sigma_t^{\text{Corp},\epsilon}}^{\dagger}\left(\sigma_t^{\text{Corp},\epsilon}{\sigma_t^{\text{Corp},\epsilon}}^{\dagger}\right)^{-1}\left(\alpha_t^{\text{Corp},\epsilon}+r_t^{\text{Corp},\epsilon}\right)\right|^2=\frac{1}{4}\sigma_t^{\text{Corp},\epsilon}{\sigma_t^{\text{Corp},\epsilon}}^{\dagger}.
\end{equation}
Equation (\ref{dlog}) becomes, after having inserted (\ref{mean0})
\begin{equation}
\left(\mathcal{D}\log(D_t^{\text{Corp},\epsilon})+r_t^{\text{Corp},\epsilon}\right)^2=
\sigma_u^{\text{Corp},\epsilon}{\sigma_u^{\text{Corp},\epsilon}}^{\dagger}\frac{W_t^{\dagger}W_t}{4t^2}.
\end{equation}
On the other hand, the second equation in (\ref{eq1}) leads to
\begin{equation}
\mathcal{D}\log(D_t^{\text{Corp}})=-\frac{\LGD_t\lambda_t}{1-\LGD_tX_t},
\end{equation}
and therefore
\begin{equation}
\lim_{\epsilon\rightarrow0}\left|{\sigma_t^{\text{Corp},\epsilon}}^{\dagger}\left(\sigma_t^{\text{Corp},\epsilon}{\sigma_t^{\text{Corp},\epsilon}}^{\dagger}\right)^{-1}\left(\alpha_t^{\text{Corp},\epsilon}+r_t^{\text{Corp},\epsilon}\right)\right|^2
=\left(\frac{\LGD_t\lambda_t}{1-\LGD_tX_t}-r_t^{\text{Corp}}\right)^2\frac{t}{Q_t(K)},
\end{equation}
which inserted into Novikov's condition (\ref{Novikov}) proves inequality (\ref{NovikovLGD}).\\
\end{proof}
\section{Credit Arbitrage Dynamics and Bubbles}
We apply now the results for arbitrage market bubbles as recalled in Subsection \ref{bubbles} to our simple credit market model consisting in two base asset, a government and a corporate bond. We allow arbitrage and just assume that the total quantity of potential arbitrage is minimized by the market forces, as explained in detail in \cite{Fa20,FaTa20Tris}.
\begin{theorem} [\textbf{Credit Arbitrage Dynamics and Arbitrage Bubbles for Credit Markets}]\label{CRThm} The following statements hold true for the credit market model of Definition \ref{CM}
\begin{equation}
D_t:=[D_t^{\text{Gov}}, D_t^{\text{Cred}}]^{\dagger}\qquad r_t:=[r_t^{\text{Gov}},r_t^{\text{Cred}}]^{\dagger}\qquad x_t:=[x_t^{\text{Gov}}, x_t^{\text{Cred}}]^{\dagger}
\end{equation}
\noindent with $T\le+\infty$ allowing for arbitrage:
\begin{itemize}
 \item[$(a)$] Market portfolio, asset values and term structures solving the minimal arbitrage problem are identically distributed along time, their returns are centered and serially uncorrelated:
 \begin{equation}\label{id2}
 \boxed{
 \begin{split}
 &([x_t;D_t;r_t]))_{t\in[0,T]}\text{ is an i.d. process with respect to  $P$,}\\
 &([\mathcal{D}x_t;\mathcal{D}D_t;\mathcal{D}r_t])_{t\in[0,T]}\text{ is centered and has vanishing autocovariance function,}
 \end{split}
 }
 \end{equation}
 \noindent In particular, conditional and total expectations of asset values, nominals and term structures are constant over time:
\begin{equation}
\boxed{
 \begin{array}{lll}
 \mathbb{E}_0[x_t]\equiv\text{const}\qquad &\mathbb{E}_0[D_t]\equiv\text{const}\qquad &\mathbb{E}_0[r_t]\equiv\text{const}\qquad\\
 \mathbb{E}_0[\mathcal{D}x_t]\equiv0\qquad &\mathbb{E}_0[\mathcal{D}D_t]\equiv0\qquad &\mathbb{E}_0[\mathcal{D}r_t]\equiv0.\qquad
 \end{array}
 }
 \end{equation}

 \noindent The variances of portfolio nominals are concurrent with those of the deflators:
\begin{equation}\label{vols}
\boxed{
\Var_0\left(D_t^j\right)\Var_0\left(\frac{x_t^j}{x_t\cdot D_t}\right)\ge\frac{1}{4},
}
\end{equation}
\noindent for all indices $j=1,2$.
  For the credit component we have:
 \begin{equation}
 \boxed{
  \mathbb{E}_0[x_t^{\text{Cred}}]\equiv\text{const}\qquad
  \mathbb{E}_0[(D^{\text{Corp}}_{t}]\equiv\text{const}\qquad
  \mathbb{E}_0[\lambda_t\LGD_t]\equiv\text{const}.
 }
 \end{equation}
 \noindent The variances of portfolio nominals are concurrent with those of the instantaneous short rates:
 \begin{equation}\label{vol_lower}
 \boxed{
\Var_0\left(D_t^{\text{Cred}}\right)\Var_0\left(\frac{x_t^{\text{Cred}}}{x_t^{\text{Gov}} D_t^{\text{Gov}}+x_t^{\text{Cred}} D_t^{\text{Cred}}}\right)\ge\frac{1}{4},
}
 \end{equation}

 \item[$(b)$] Expectation and variance of the discounted value for the credit bubble read
 \begin{equation}\label{exp_var_base_bubble}
 \boxed{
  \begin{split}
  \mathbb{E}_0[\widehat{B}_t^{\text{Cred}}]&=\mathbb{E}_0\left[D_t^{\text{Cred}}-\widehat{C}_t^{\text{Cred}}\right]-\mathbb{E}_0^*\left[D_T^{\text{Cred}} - \widehat{C}_T^{\text{Cred}}\right]\\
  \Var_0(\widehat{B}_t^{\text{Cred}})&=\Var_0\left(D_t^{\text{Cred}}-\widehat{C}_t^{\text{Cred}}\right)+\Var_0^*\left(D_T^{\text{Cred}}-\widehat{C}_T^{\text{Cred}}\right)+\\
  &\qquad-2\Cov_0^*\left(D_t^{\text{Cred}}-\widehat{C}_t^{\text{Cred}},D_T^{\text{Cred}}-\widehat{C}_T^{\text{Cred}}\right).
  \end{split}
  }
 \end{equation}

 \item[$(c)$] Expectation and variance of the discounted value for the credit derivative $G(S_T^{\text{Cred}})$ on the credit asset read
 \begin{equation}\label{exp_var_contingent_bubble}
 \boxed{
  \begin{split}
  \mathbb{E}_0[\widehat{B}_t^{\text{Cred}}(G)]&=\mathbb{E}_0[\widehat{V}_t^{\text{Cred}}(G)]-\mathbb{E}_0^*\left[\widehat{G}\left(S_T^{\text{Cred}}\exp\left(\frac{C_T^{\text{Cred}}}{S_T^{\text{Cred}}}(T-t)\right)\right)\right]\\
  \Var_0(\widehat{B}_t^{\text{Cred}}(G))&=\Var_0(\widehat{V}_t^{\text{Cred}}(G))+\Var_0^*\left(\widehat{G}\left(S_T\exp\left(\frac{C_T^{\text{Cred}}}{S_T^{\text{Cred}}}(T-t)\right)\right)\right).
  \end{split}
  }
 \end{equation}
\end{itemize}
\end{theorem}

\section{Conclusion}
By introducing an appropriate stochastic differential geometric
formalism, the classical theory of stochastic finance can be
embedded into a conceptual framework called Geometric
Arbitrage Theory, where the market is modelled with a principal
fibre bundle and arbitrage corresponds to its curvature. The tools developed can be applied to
default risk and recovery modeling leading to arbitrage and no arbitrage characterizations for credit markets,
as well as the explicit computation for arbitrage credit bubbles and credit market dynamics.







\end{document}